\DeclareMathOperator*{\argmin}{argmin}
\newcommand{\may}{\mathit{May}}
\newcommand{\must}{\mathit{Must}}
\newcommand{\existsh}{\mathit{EH}}
\newcommand{\existsm}{\mathit{EM}}
\newcommand{\mydots}\dots
\newcommand{\sizefactor}{.9}
\begin{document}

%\newtheorem{definition}{Definition}
%\spnewtheorem{theorem}{Theorem}
%\newtheorem{lemma}{Lemma}
\spnewtheorem{notation}{Notation}{\bfseries}{\itshape}

\newcommand{\soft}[1]{\textsc{#1}}
\newcommand{\benchmarkname}[1]{\textsf{#1}}
\newcommand{\todo}[1]{{\color{red}[#1]}}
\newcommand{\redtodo}[1]{{\color{red}[#1]}}
   \renewcommand{\todo}[1]{}

\newcommand{\nhspace}{ \hspace{-0.0035mm}}
\newcommand{\nvspacesmall}{ \hspace{-0.0025mm}}

\author{
  Valentin Touzeau\inst{1} \and
  Claire Ma\"{\i}za\inst{1} \and
  David Monniaux\inst{1} \and
  Jan Reineke\inst{2}
}

\institute{
  Univ. Grenoble Alpes, VERIMAG, F-38000 Grenoble, France\\
  CNRS, VERIMAG, F-38000 Grenoble, France\\
  \email{firstname.lastname@univ-grenoble-alpes.fr}
  \and
  Saarland University, Saarland Informatics Campus\\
  Saarbr\"ucken, Germany\\
  \email{reineke@cs.uni-saarland.de}
}

\title{Ascertaining Uncertainty\\for Efficient Exact Cache Analysis
\thanks{This work was partially supported by the
\href{http://erc.europa.eu/}{European Research Council} under the
European Union's Seventh Framework Programme (FP/2007-2013) / ERC Grant
Agreement nr. 306595 \href{http://stator.imag.fr}{``STATOR''.}}
}
%\title{Ascertaining Uncertainty:\\ Introspective Cache Analysis}
%\title{Knowing What You Cannot Know:\\ Introspective Cache Analysis}
%\title{Beyond may and must: narrowing uncertainty in cache analysis}
%\subtitle{Subtitle Tex}

\maketitle

\begin{abstract}
%Cache analysis aims at determining whether a given instruction always results in a cache miss, always results in a cache hit, or results in a hit or miss depending on program execution.
Static cache analysis characterizes a program's cache behavior by determining in a sound but approximate manner which memory accesses result in cache hits and which result in cache misses.
%Cache analysis aims at classifying instructions as ``always hit'' or ``always miss'', in case an instruction always results in a cache hit, respectively, always results in a cache miss.
Such information is valuable in optimizing compilers, worst-case execution time analysis, and side-channel attack quantification and mitigation.

Cache analysis is usually performed as a combination of ``must'' and ``may'' abstract interpretations, %computing, for each program point,  instructions guaranteed to be cached and guaranteed not to be cached.
classifying instructions as  either ``always hit'', ``always miss'', or ``unknown''.
Instructions classified as ``unknown'' might result in a hit or a miss depending on program inputs or the initial cache state.
It is equally possible that they do in fact always hit or always miss, but the cache analysis is too coarse to see it.

Our approach to eliminate this uncertainty consists in
\begin{inparaenum}[(i)]
\item a novel abstract interpretation able to ascertain that a particular instruction may definitely cause a hit and a miss on different paths, and 
\item an exact analysis, removing all remaining uncertainty, based on model checking, using abstract-interpretation results to prune down the model for scalability.
\end{inparaenum}

We evaluated our approach on a variety of examples;
it notably improves precision upon classical abstract interpretation
at reasonable cost.
\end{abstract}

%\category{B.3.2}{Memory structures}{Design styles}[Cache memories]
%\category{D.2.4}{Software engineering}{Software/Program Verification}

%\keywords
%caches, static analysis, abstract interpretation, model checking

\section{Introduction}
There is a large gap between processor and memory speeds termed the ``memory wall''~\cite{Wulf1995}.
To bridge this gap, processors are commonly equipped with caches, i.e., small but fast on-chip memories that hold recently-accessed data, in the hope that most memory accesses can be served at a low latency by the cache instead of being served by the slow main memory.
Due to temporal and spatial locality in memory access patterns caches are often highly effective.

%\todo{rework this paragraph}A naive view is that a cache of size $N$ holds the $N$ bytes last used by the processor, and that the \emph{least recently used} (LRU) data is evicted from the cache when one has to make room for new data.
%Such a \emph{fully associative} cache would however be impractical to build, and other \emph{cache replacement policies} are used. One such policy is to map each memory address to a cache set, and inside each cache set to evict the least recently used data.
%Precise cache behavior thus involves machine-dependent aspects such as memory addresses, their mapping to cache sets, size of cache blocks, number of cache blocks in each set, replacement policy, etc.~\cite{Reineke_PhD,Grund_PhD} %commenting this paragraph as it doesn't appear essential at this point of the discussion

In hard real-time applications, it is important to bound a program's \emph{worst-case execution time} (WCET). For instance, if a control loop runs at 100~Hz, one must show that its WCET is less than 0.01~s.
In some cases, measuring the program's execution time on representative inputs and adding a safety margin may be enough, but in safety-critical systems one may wish for a higher degree of assurance and use static analysis to cover all cases.
On processors with caches, such a static analysis involves classifying memory accesses into cache hits, cache misses, and unclassified~\cite{Wilhelm_et_al_TECS08}.
Unclassified memory accesses that in reality result in cache hits may lead to gross overestimation of the WCET.

Tools such as \soft{Otawa}%
\footnote{\url{http://www.otawa.fr/}: an academic tool developed at IRIT, Toulouse.}
and \soft{aiT}%
\footnote{\url{https://www.absint.com/ait/}: a commercial tool developed by Absint GmbH.}
compute an upper bound on the WCET of programs after first running a static analysis based on abstract interpretation~\cite{Lv16} to classify memory accesses.
Our aim, in this article, is to improve upon that approach with a refined abstract interpretation and a novel encoding into finite-state model checking.

Caches may also leak secret information~\cite{Canteau_et_al_RR5881_2006} to other programs running on the same machine---through the shared cache state---or even to external devices---due to cache-induced timing variations.
For instance, cache timing attacks on software implementations of the Advanced Encryption Standard \cite{Bernstein_2005} were one motivation for adding specific hardware support for that cipher to the x86 instruction set~\cite{Mowery:2012:AXC:2381913.2381917}.
Cache analysis may help identify possibilities for such \emph{side-channel attacks} and quantify the amount of information leakage~\cite{Doychev2015};
improved precision in cache analysis then translates into fewer false alarms and tighter bounds on leakage.

An ideal cache analysis would statically classify every memory access at every machine-code instruction in a program into one of three cases:
\begin{inparaenum}[i)]
\item
the access is a cache hit in all possible executions of the program
\item
the access is a cache miss in all possible executions of the program
\item
in some executions the access is a hit and in others it is a miss.
% \todo{this classification is not typically an aim. and prior to our analysis no analysis was able to make such a statement}
\end{inparaenum}
However, no cache analysis can perfectly classify all accesses into these three categories.

\newcommand{\versatz}{13mm}

\begin{figure}[t]
\begin{center}
\begin{tikzpicture}[scale=0.9, transform shape, inner sep=2mm, 
				maymust/.style={rectangle,draw=blue!70,fill=blue!30,thick}, 
				defunknown/.style={rectangle,draw=red!45,fill=red!15,thick},
				finalresult/.style={rectangle,draw=green!50, fill=green!20, thick}]
%    First, locatusee each of the nodes and name them
 %   \node [maymust] (top1) at (0,-0.5) {unknown};
 %   \node [maymust, below of=top1, xshift=\versatz] (allmiss1) {$\forall$miss}; 
  %  \node [maymust, below of=top1, xshift=-\versatz] (allhit1) {$\forall$hit}; 

%	Now draw the lines:
  %  \draw [thick] (top1) -- (allmiss1);
  %  \draw [thick] (top1) -- (allhit1);

    \draw[finalresult] (2.7, -1.5) rectangle (12.2, -2.5);
    \node at (10.75, -2) {Result after MC};

%    First, locate each of the nodes and name them
    \node [maymust] (top) at (6, 0) {unknown};
    \node [defunknown, below of=top, xshift=\versatz] (existsmiss) {$\exists$Miss}; 
    \node [defunknown, below of=top, xshift=-\versatz] (existshit) {$\exists$Hit}; 
    \node [maymust, below of=existsmiss, xshift=\versatz] (allmiss) {$\forall$Miss}; 
    \node [defunknown, below of=existsmiss, xshift=-\versatz] (defunknown) {$\exists$Hit $\wedge$ $\exists$Miss}; 
    \node [maymust, below of=existshit, xshift=-\versatz] (allhit) {$\forall$Hit}; 
    
    \node [maymust, right of=top, xshift=3.15*\versatz, yshift=-2mm] (legend1) {Classical AI};
    \node [defunknown, below of=legend1, yshift=3mm] {Our new AI};
    \node [above of=legend1, yshift=-5mm, xshift=-5mm] {Legend:};

%	Now draw the lines:
    \draw [thick] (top) -- (existsmiss);
    \draw [thick] (top) -- (existshit);
    \draw [thick] (existshit) -- (allhit);
    \draw [thick] (existshit) -- (defunknown);
    \draw [thick] (existsmiss) -- (allmiss);
    \draw [thick] (existsmiss) -- (defunknown);
\end{tikzpicture}
\end{center}%\vspace{-1.5em}

\caption{Possible classifications of classical abstract-interpretation-based cache analysis, our new abstract interpretation, and after refinement by model checking.}\label{fig:classifications}
\end{figure}
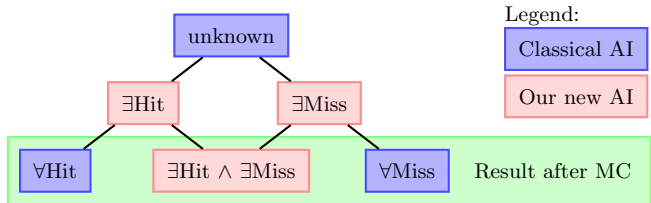

%\newcommand{\AI}[1]{\textit{#1}}
%\newcommand{\MC}[1]{{\color{blue}#1}}
%
%\begin{figure}[t]%
%{\footnotesize
%\begin{tabular}{l|p{4.5em}|p{5em}|p{5em}|p{5em}|p{5em}|p{4.5em}|}
%\cline{2-7}
%\AI{classical} &
%\multirow{2}{5em}{\AI{AI\\ ``must hit''}} &
%\multicolumn{4}{c|}{\AI{AI ``unknown''}} &
%\multirow{2}{5em}{\AI{AI\\ ``must miss''}} \\
%\cline{3-6}
%\MC{our analysis} &
%& \MC{MC \;\;``must hit''}
%& \MC{MC \;\;\; ``definitely unknown''}
%& \MC{AI \;\;``definitely unknown''}
%& \MC{MC \;\;``must miss''} &\\ \cline{2-7}
%
%our final result &
%\multicolumn{2}{c|}{must hit} &
%\multicolumn{2}{c|}{unknown} &
%\multicolumn{2}{c|}{must miss}\\ \cline{2-7}
%\end{tabular}}
%\label{fig_HUM}
%\caption{Our optimally precise analysis compared to classical may/must cache analysis.}
%%\todo{idea for experimental evaluation: show how much of the uncertainty of the AI analysis is refined in which direction by MC.}} ok done
%\end{figure}

One first reason is that perfect cache analysis would involve testing the reachability of individual program statements, which is undecidable.%
\footnote{One may object that given that we consider machine-level aspects, memory is bounded and thus properties are decidable. The time and space complexity is however prohibitive.}
A simplifying assumption often used, including in this article, is that all program paths are feasible---this is safe, since it overapproximates possible program behaviors.
Even with this assumption, analysis is usually performed using sound but incomplete abstractions that can safely determine that some accesses always hit (``$\forall$Hit'' in Figure~\ref{fig:classifications}) or always miss (``$\forall$Miss'' in Fig. \ref{fig:classifications}). The corresponding analyses are called \emph{may} and \emph{must} analysis and referred to as ``classical AI'' in Fig.~\ref{fig:classifications}.
Due to incompleteness the status of other accesses however remains ``unknown''~(Fig.~\ref{fig:classifications}).

\subsubsection{Contributions}
In this article, we propose an approach to eliminate this uncertainty, with two main contributions (colored red and green in Figure~\ref{fig:classifications}):
\begin{enumerate}
\item\label{item:contribution:abstract} A novel abstract interpretation that safely concludes that certain accesses are {hits in some} executions (``$\exists$Hit''), {misses in some} executions (``$\exists$Miss''), or {hits in some and misses in other} executions (``$\exists$Hit $\wedge$ $\exists$Miss'' in Fig. \ref{fig:classifications}).
Using this analysis and prior must- and may- cache analyses, most accesses are precisely classified.
\item\label{item:contribution:mc} The classification of accesses with remaining uncertainty (``unknown'', ``$\exists$Hit'', and ``$\exists$Miss'') is refined by model checking using an exact abstraction of the behavior of the cache replacement policy. The results from the abstract interpretation in the first analysis phase are used to dramatically reduce the complexity of the model. 
% This is, to our best knowledge, the first time that model-checking is applied in this way to cache analysis.\todo{need to mention singapore paper}
\end{enumerate}

Because the model-checking phase is based on an exact abstraction of the cache replacement policy, our method, overall, is \emph{optimally precise}: it answers precisely whether a given access is always a hit, always a miss, or a hit in some executions and a miss in others (see ``Result after MC'' in Fig. \ref{fig:classifications}).\footnote{This completeness is relative to an execution model where all control paths are feasible, disregarding the functional semantics of the edges.}
This precision improvement in access classifications can be beneficial for tools built on top of the cache analysis: in the case of WCET analysis for example, a precise cache analysis  not only improves the computed WCET bound; it can also lead to a faster analysis. Indeed, in case of an unclassified access, both possibilities (cache hit and cache miss) have to be considered~\cite{Lundqvist99,Reineke06}.  %

The model-checking phase would be sufficient to resolve all accesses, but our experiments show this does not scale; it is necessary to combine it with the abstract-interpretation phase for tractability, thereby reducing \begin{inparaenum}[(a)]
\item the number of model-checker calls, and
\item the size of each model-checking problem.
\end{inparaenum}

% 
% \section{Motivating Example}
% \todo{should we add a motivating/running example? there we could discuss all the possible cases, and the basic ideas}

\section{Background: Caches and Static Cache Analysis}
\label{sec:cache_modeling}
\subsubsection{Caches}

Caches are fast but small memories that store a subset of the main
memory's contents to bridge the latency gap between the CPU and main
memory.  To profit from spatial locality and to reduce management
overhead, main memory is logically partitioned into a set of
\emph{memory blocks}~M. 
Each block is cached as a whole in a cache line of the same size.

When accessing a memory block, the cache logic has to determine
whether the block is stored in the cache (``cache hit'') or not
(``cache miss''). 
For efficient look up, each block can only
be stored in a small number of cache lines known as a \emph{cache set}.
Which cache set a memory block maps to is determined by a subset of the bits of its address.
The cache is partitioned into equally-sized cache sets.  The size~$k$ of a
cache set in blocks is called the \emph{associativity} of the cache.

Since the cache is much smaller than main memory, a \emph{replacement policy} must decide which memory block to replace upon a cache miss.
Importantly, replacement policies treat sets independently\footnote{To our best knowledge, the only exception to this rule is the \emph{pseudo round-robin} policy, found, e.g., in the ARM Cortex A-9.}, so that accesses to one set do not influence replacement decisions in other sets. 
Well-known replacement policies are
least-recently-used (LRU), used, e.g., in various Freescale processors such as the MPC603E and the TriCore17xx; pseudo-LRU (PLRU), a cost-efficient variant of LRU; and first-in first-out (FIFO).  
In this article we focus exclusively on LRU. The application of our ideas to other policies is left as future work. 

LRU naturally gives rise to a notion of \emph{ages} for memory blocks:
The age of a block $b$ is the number of pairwise different blocks that map to the same cache set as $b$ that have been accessed since the last access to~$b$.
If a block has never been accessed, its age is $\infty$.
Then, a block is cached if and only if its age is less than the cache's associativity $k$.

Given this notion of ages, the state of an LRU cache can be modeled by a mapping that assigns to each memory block its age, where ages are truncated at $k$, i.e., we do not distinguish ages of uncached blocks.
We denote the set of cache states by $C = M \rightarrow \{0, \dots, k\}$.
Then, the effect of an access to memory block~$b$ under LRU replacement can be formalized as follows\footnote{Assuming for simplicity that all cache blocks map to the same cache set.}:
	{\small\begin{align}
		update : C \times M & \rightarrow C\nonumber\\
		(q, b)	& \mapsto \lambda b'.\label{eq:lruconcreteupdate}
		\begin{cases}
			0			& \text{ if } b' = b\\
			q(b')			& \text{ if } q(b') \geq q(b)\\
			q(b')+1		& \text{ if } q(b') < q(b) \wedge q(b') < k\\
			k			& \text{ if } q(b') < q(b) \wedge q(b') = k
		\end{cases}
	\end{align}}

\subsubsection{Programs as Control-flow Graphs}

As is common in program analysis and in particular in work on cache analysis, we abstract the program under analysis by its control-flow graph: 
vertices represent control locations and edges represent the possible flow of control through the program.
In order to analyze the cache behavior, edges are adorned with the addresses of the memory blocks that are accessed by the instruction, including the instruction being fetched. 
%In the remainder of the article we will use $a$, $b$, $c$, \dots as names for memory blocks.\todo{check whether we really need this}

For instruction fetches in a program without function pointers or computed jumps, this just entails knowing the address of every instruction---thus the program must be linked with absolute addresses, as common in embedded code.
For data accesses, a pointer analysis is required to compute a set of possible addresses for every access.
If several memory blocks may be alternatively accessed by an instruction, multiple edges may be inserted; so there may be multiple edges between two nodes.
We therefore represent a control-flow graph by a tuple $G = (V, E)$, where $V$ is the set of vertices and $E \subseteq V \times (M \cup \{\bot\}) \times V$ is the set of edges, where $\bot$ is used to label edges that do not cause a memory access.

The resulting control-flow graph $G$ does not include information on the functional semantics of the instructions, e.g. whether they compute an addition.
All paths in that graph are considered feasible, even if, taking into account the instruction semantics, they are not---e.g. a path including the tests $x \leq 4$ and $x \geq 5$ in immediate succession is considered feasible even though the two tests are mutually exclusive.
All our claims of completeness are relative to this model.
%Recall that taking into account program semantics leads to undecidable problems, thus some over-approximation of feasible paths is anyways needed;
%future work may include a finer one.

As discussed above, replacement decisions for a given cache set are usually independent of memory accesses to other cache sets.
Thus, analyzing the behavior of $G$ on all cache sets is equivalent to separately analyzing its projections onto individual cache sets: a projection of $G$ on a cache set $S$ is $G$ where only blocks mapping to $S$ are kept.
Projected control-flow graphs may be simplified, e.g. a self-looping edge labeled with no cache block may be removed.
Thus, we assume in the following that the analyzed cache is fully associative, i.e. of a single cache set.

\begin{comment}%do we need this here?
Since the number of possible blocks is finite, as well as the associativity, such a model consists of a finite number of states.
It would be possible to run a model checker and get a perfect classification of all accesses into ``always hit'', ``always misses'', and ``definitely unknown'' (hits on some paths, misses on some paths).
We found, however, that directly encoding this problem into a model checker led to state explosion and unacceptable costs.
We tried several encodings, to no avail.
Section~\ref{sec:model_checking} explains how we worked around this difficulty by using abstract interpretation results to prune the model.\todo{outlook on model checking section. should we have a similar outlook on the AI def. unknown section?}
\end{comment}

%What happens if we move from intervals of ages to sets of possible ages? do we get even more precise results at only little additional cost?

\subsubsection{Collecting Semantics}
In order to classify memory accesses as ``always hit'' or ``always miss'', cache analysis needs to characterize for each control location in a program \emph{all} cache states that may reach that location in any execution of the program. This is commonly called the \emph{collecting semantics}.

Given a control-flow graph $G=(V, E)$, the \emph{collecting semantics} is defined as the least solution to the following set of equations, where $R^C : V \rightarrow \mathcal{P}(C)$ denotes the set of reachable concrete cache configurations at each program location, and $R^C_0(v)$ denotes the set of possible initial cache configurations:
\begin{equation}\label{def:concrete_reachable}
 	\forall v' \in V: R^C(v') = R^C_0(v') \cup \bigcup_{(v,b, v') \in E} update^C(R^C(v), b),
\end{equation}
where $update^C$ denotes the cache update function lifted to sets of states, i.e., $update^C(Q, b) = \{update(q, b) \mid q \in Q\}$.

Explicitly computing the collecting semantics is practically infeasible. % due to the large number of cache states.
For a tractable analysis, it is necessary to operate in an abstract domain whose elements compactly represent large sets of concrete cache states.

\subsubsection{Classical Abstract Interpretation of LRU Caches}
To this end, the classical abstract interpretation of LRU caches \cite{Ferdinand99} assigns to every memory block at every program location an interval of ages enclosing the possible ages of the block during any program execution.
The analysis for upper bounds, or \emph{must analysis}, can prove that a block must be in the cache;
conversely, the one for lower bounds, or \emph{may analysis}, can prove that a block may not be in the cache.

The domains for abstract cache states under may and must analysis are $\mathcal{A}_{\may} = \mathcal{A}_{\must} = C = M \rightarrow \{0,...,k\}$, where ages greater than or equal to the cache's associativity~$k$ are truncated at $k$ as in the concrete domain. 
For reasons of brevity, we here limit our exposition to the must analysis. 
The set of concrete cache states represented by abstract cache states is given by the concretization function:
	%\begin{align*}
	%	\gamma_{\may}(\hat{q}_{\may}) & = \{q \in C \mid \forall m \in M: \hat{q}_{\may} \leq q(m)\}\\
		$\gamma_{\must}(\hat{q}_{\must}) = \{q \in C \mid \forall m \in M: q(m) \leq \hat{q}_{\must}\}.$
	%\end{align*}
Abstract cache states can be joined by taking their pointwise maxima: $\hat{q}_{M1} \sqcup_{\must} \hat{q}_{M2} = \lambda m \in M: \max \{\hat{q}_{M1}(m), \hat{q}_{M2}(m)\}$.
For reasons of brevity, we also omit the definition of the abstract transformer ${update}_{\must}$, which closely resembles its concrete counterpart given in (\ref{eq:lruconcreteupdate}), and which can be found e.g. in~\cite{Reineke_PhD}.

Suitably defined abstract semantics $R_{\must}$ and $R_{\may}$ can be shown to overapproximate their concrete counterpart: 
\begin{theorem}[Analysis Soundness~\cite{Ferdinand99}]
	The may and the must abstract semantics are safe approximations of the collecting semantics:
	\begin{equation}
		\forall v \in V: R^C(v) \subseteq \gamma_{\must}(R_{\must}(v)), R^C(v) \subseteq \gamma_{\may}(R_{\may}(v)).
	\end{equation}
\end{theorem}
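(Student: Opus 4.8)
The plan is to prove soundness by fixpoint induction, exploiting the fact that both the concrete collecting semantics $R^C$ and the abstract semantics $R_{\must}$ (resp. $R_{\may}$) are defined as least solutions of systems of equations of the same shape. First I would recall the standard abstract-interpretation recipe: it suffices to establish a \emph{local soundness} (or local consistency) condition relating the concrete and abstract transformers through the concretization function $\gamma_{\must}$, and then lift this to the global fixpoints via monotonicity and the fixpoint-transfer theorem. Concretely, I would show that (i) $\gamma_{\must}$ is monotone, (ii) the abstract join $\sqcup_{\must}$ soundly overapproximates concrete union, i.e.\ $\gamma_{\must}(\hat q_1) \cup \gamma_{\must}(\hat q_2) \subseteq \gamma_{\must}(\hat q_1 \sqcup_{\must} \hat q_2)$, and (iii) the abstract update commutes soundly with the concrete update lifted to sets: $update^C(\gamma_{\must}(\hat q), b) \subseteq \gamma_{\must}(update_{\must}(\hat q, b))$, and that the initial abstract states cover the initial concrete states, $R^C_0(v) \subseteq \gamma_{\must}(R_{\must,0}(v))$.

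Granting these local facts, the second step is the induction itself. I would express the concrete semantics as the least fixpoint of an operator $F^C$ on $V \to \mathcal{P}(C)$ and the abstract semantics as the least fixpoint of an operator $F_{\must}$ on $V \to \mathcal{A}_{\must}$, and define the pointwise concretization $\Gamma(\hat R) = \lambda v.\, \gamma_{\must}(\hat R(v))$. The goal $R^C(v) \subseteq \gamma_{\must}(R_{\must}(v))$ for all $v$ is exactly $R^C \subseteq \Gamma(R_{\must})$ in the pointwise-inclusion order. By Kleene iteration, $R^C = \bigcup_n (F^C)^n(\bot)$, so I would argue by induction on $n$ that $(F^C)^n(\bot) \subseteq \Gamma(R_{\must})$: the base case is the empty set, and the inductive step applies the local soundness conditions (iii) and (ii) to push $F^C$ through $\Gamma$, using that $R_{\must}$ is a \emph{fixpoint} (hence $F_{\must}(R_{\must}) = R_{\must}$) together with the commutation $F^C \circ \Gamma \sqsubseteq \Gamma \circ F_{\must}$ to close the loop. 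The same argument applies verbatim to the may analysis with $\gamma_{\may}$ and $\sqcup_{\may}$ (pointwise minima, with the inclusion reversed on individual ages).

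I expect the main obstacle to be step (iii), the local soundness of the abstract update. Because the excerpt deliberately omits the definition of $update_{\must}$, a fully rigorous proof would require reintroducing it and then performing a case analysis mirroring the four cases of the concrete update in~(\ref{eq:lruconcreteupdate}), checking that truncated ages are monotonically tracked: a block's concrete age is always bounded above by its abstract must-age before and after the access. The delicate points are the interaction of the age increment with truncation at $k$ and the treatment of the accessed block $b$ being reset to age $0$; one must verify the upper-bound invariant is preserved in each case. Since the statement is attributed to prior work~\cite{Ferdinand99} and the transformer is referenced as standard~\cite{Reineke_PhD}, I would cite the established correctness of $update_{\must}$ for the per-block bound and focus the written argument on the clean fixpoint-transfer skeleton, treating the per-case update verification as routine.
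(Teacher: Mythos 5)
Your proposal is correct and follows essentially the same route as the paper: the paper itself defers this classical result to Ferdinand and Wilhelm, but the proof skeleton it uses for the analogous EH/EM soundness theorem in Appendix~A is exactly your argument --- local consistency of the abstract update, soundness of the join, coverage of the initial states, and a Kleene-iteration induction to transfer these to the least fixpoints. Your identification of the local soundness of $update_{\must}$ (with its truncation at $k$ and the reset of the accessed block to age $0$) as the only part requiring case analysis is also consistent with how the paper treats it, namely as standard and delegated to the cited literature.
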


\section{Abstract Interpretation for Definitely Unknown}
\label{sec:abstract_interpretation}
\emph{All proofs can be found in \ifthenelse{\boolean{conferenceversion}}{Appendix A of the technical report~\cite{technical_report}}{Appendix~\ref{sec:abstract_interpretation_proofs}}.} %
\begin{figure}[tb]
%\todo{using same color scheme as in introduction. feel free to change it but bear in mind that the colors should be distinguishable in b/w printouts.}
%
\begin{center}
\begin{tikzpicture}[scale=0.85,transform shape,->,node distance=1cm,
		defunknown/.style={rectangle,draw=red!45,fill=red!15,thick},
		finalresult/.style={rectangle,draw=green!50, fill=green!20, thick},
		programmodel/.style={rectangle,draw=blue!50, fill=blue!20, thick},
		cachemodel/.style={rectangle,draw=blue!50, fill=blue!20, thick}]

  \node[defunknown, entity] (node1) {
  \begin{tabular}{c}
   Abstract Interpretation \\ \hline
   may/must analysis \\
   $\exists$hit/$\exists$miss analysis
  \end{tabular}

  };
 \node[programmodel, attribute] (cfg) [left = of node1, xshift=8mm, yshift=-1cm, text width=2cm, align=center] {Control-flow graph};
 \node[cachemodel, attribute] (cc) [left = of node1, xshift=8mm, yshift=1cm, text width=2cm, align=center] {Cache configuration};
  \node[programmodel, attribute] (pm) [right = of node1, xshift=-8mm, yshift=-1cm, text width=2.2cm, align=center] {Simplified program model};
  \node[cachemodel, attribute] (cm) [right = of node1, xshift=-8mm, yshift=1cm, text width=2.2cm, align=center] {Focused cache model};
  \node[finalresult, entity] (mc) [right = of node1, xshift=2.6cm, text width = 1.2cm, text centered, minimum height=1cm] {Model checker};

  \draw [-, decorate,thick, decoration={brace,amplitude=10pt},xshift=-4pt,yshift=0pt] (-5.1,1.6) -- (2.0,1.6) node [black,midway,yshift=0.6cm] {Section \ref{sec:abstract_interpretation}};
  \draw [-, decorate,thick, decoration={brace,amplitude=10pt},xshift=-4pt,yshift=0pt] (2.1,1.6) -- (7.2,1.6) node [black,midway,yshift=0.6cm] {Section \ref{sec:model_checking}};

  \path (cfg) edge (node1)
  (cc)edge (node1)
  (node1)edge(pm)
  (node1)edge(cm)
  (cm)edge(mc)
  (pm)edge(mc);

\end{tikzpicture}
\end{center}%\vspace{-1.5em}

\caption{Overall analysis flow.}\label{fig_AI+MC}
\end{figure}
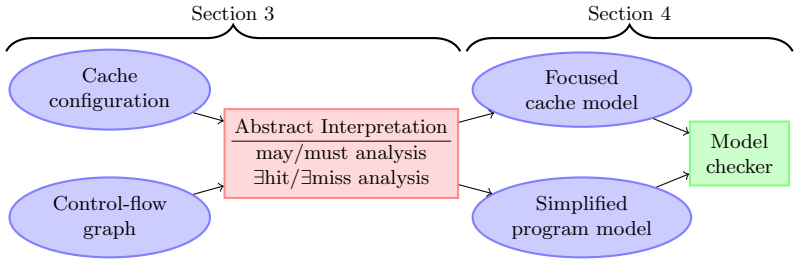
Together, may and must analysis can classify accesses as ``always hit'', ``always miss'' or ``unknown''.
An access classified as ``unknown'' may still be ``always hit'' or ``always miss'' but not detected as such due to the imprecision of the abstract analysis;
otherwise it is ``definitely unknown''.
Properly classifying ``unknown'' blocks into ``definitely unknown'', ``always hit'', or ``always miss'' using a model checker is costly.
We thus propose an abstract analysis that safely establishes that some blocks are ``definitely unknown'' under LRU replacement.

Our analysis steps are summarized in Figure~\ref{fig_AI+MC}.
Based on the control-flow graph and on an initial cache configuration, the abstract-interpretation phase classifies some of the accesses as ``always hit'', ``always miss'' and ``definitely unknown''.
Those accesses are already precisely classified and thus do not require a model-checking phase. The AI phase thus reduces the number of accesses to be classified by the model checker. In addition, the results of the AI phase are used to simplify the model-checking phase, which will be discussed in detail in Section~\ref{sec:model_checking}.

An access is ``definitely unknown'' if there is a concrete execution in which the access misses and another in which it hits.
The aim of our analysis is to prove the existence of such executions to classify an access as ``definitely unknown''.
Note the difference with classical may/must analysis and most other abstract interpretations, which compute properties that hold \emph{for all executions}, while here we seek to prove that \emph{there exist} two executions with suitable properties.

An access to a block $a$ results in a hit if $a$ has been accessed recently, i.e., $a$'s age is low.
Thus we would like to determine the minimal age that $a$ may have in a reachable cache state immediately prior to the access in question.     
The access can be a hit if and only if this minimal age is lower than the cache's associativity.
Because we cannot efficiently compute exact minimal ages, we devise an \emph{Exists Hit} (EH) analysis to compute safe upper bounds on minimal ages.
Similarly, to be sure there is an execution in which accessing $a$ results in a miss, we compute a safe lower bound on the maximal age of $a$ using the \emph{Exists Miss} (EM) analysis.

\lstset{
    showlines=true,
    columns=fullflexible,
    breaklines=true,
    captionpos=b,
    basicstyle=\sffamily\normalsize,
    commentstyle=\itshape,
    keywordstyle=\bfseries,
    numbers=none,
    mathescape=true,
    escapechar=@,
}

\lstdefinestyle{alang}{
    morekeywords={if,then,else,end,while,assert,skip,true,false},
    numbers=none,
	literate=
	{:=}{$\coloneqq\ $}1
	{==}{$=\ $}1
	{!=}{$\ne\ $}1
	{&&}{$\wedge\ $}1
	{||}{$\vee\ $}1
	{<=}{$\le\ $}1
	{>=}{$\ge\ $}1
}

% program point
\tikzset{pp/.style={circle, draw, fill=black, inner sep=0pt, minimum size=6pt, text centered}}
\tikzset{cfge/.style={draw,-stealth,thick}}
\tikzset{cfgep/.style={draw,,thick}}
\tikzset{cfgfe/.style={draw,-stealth,ultra thick}}
\tikzset{cfgp/.style={
     style=cfge,
     decoration={snake, amplitude=.4mm, segment length=2mm, post length=1mm},
     decorate
 }}
\tikzset{cfgbe/.style={style=cfge, dashed}}

% program point edge with an instruction on top
\def\ppsimpleedge#1#2{\draw[cfge] (#1) -- (#2);}
\def\ppedge#1#2#3{\draw[cfge] (#1) -- (#2) node[midway,draw,fill=white,rounded corners=4pt,minimum height=16pt,minimum width=16pt] {#3};}
\def\ppedgebend#1#2#3#4{\draw[cfge] (#1) -- (#2) node[draw,fill=white,rounded corners=4pt,minimum height=16pt,minimum width=16pt] {#4} -- (#3);}

\newcommand{\skalierung}{0.9}

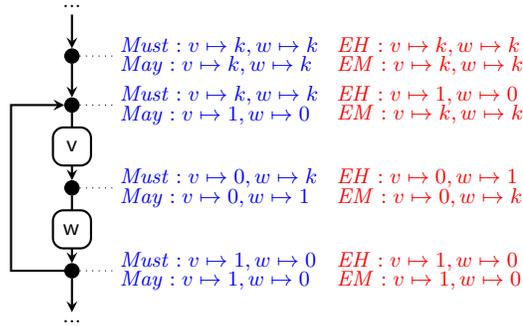
\begin{figure}[t]
\begin{center}
	\begin{tikzpicture}[scale=\skalierung,transform shape]
            \node (a0) {...};
            \node[pp] (a) [below=5mm of a0] {};
            \node[pp] (b) [below=5mm of a] {};
            \node[pp] (c) [below=10mm of b] {};
            \node[pp] (d) [below=10mm of c] {};
            \node (i1) [below=4.5mm of c, xshift = -4mm] {};
            \node (i2) [below=4.5mm of c, xshift = 4mm] {};
            \node (e) [below=5mm of d] {...};
        
            \ppsimpleedge{a0}{a}
            \ppsimpleedge{a}{b}
            \ppedge{b}{c}{\lstinline[style=alang]|v|}
            \ppedge{c}{d}{\lstinline[style=alang]|w|}
            \draw[cfge] (d) -| ([xshift=-9mm] c.center) |- (b);
            \ppsimpleedge{d}{e}
        
           % \node[left=-7mm of a, yshift=0mm, text width=4.25cm] {Cache is initially empty:};

            \node[blue,right=5mm of a, yshift=0mm, text width=4.25cm] (blaa) {$\must: v \mapsto k, w \mapsto k$\\[-1mm] $\may: v \mapsto k, w \mapsto k$};
            \node[red,right=37mm of a, yshift=0mm, text width=4.25cm] {$\existsh: v \mapsto k, w \mapsto k$\\[-1mm] $\existsm: v \mapsto k, w \mapsto k$};
            \node[blue,right=5mm of b, yshift=0mm, text width=4.25cm] (blab) {$\must: v \mapsto k, w \mapsto k$\\[-1mm] $\may: v \mapsto 1, w \mapsto 0$};
            \node[red,right=37mm of b, yshift=0mm, text width=4.25cm] {$\existsh: v \mapsto 1, w \mapsto 0$\\[-1mm] $\existsm: v \mapsto k, w \mapsto k$};
            \node[blue,right=5mm of c, yshift=0mm, text width=4.25cm] (blac) {$\must: v \mapsto 0, w \mapsto k$\\[-1mm] $\may: v \mapsto 0, w \mapsto 1$};
            \node[red,right=37mm of c, yshift=0mm, text width=4.25cm] {$\existsh: v \mapsto 0, w \mapsto 1$\\[-1mm] $\existsm: v \mapsto 0, w \mapsto k$};
            \node[blue,right=5mm of d, yshift=0mm, text width=4.25cm] (blad) {$\must: v \mapsto 1, w \mapsto 0$\\[-1mm] $\may: v \mapsto 1, w \mapsto 0$};
            \node[red,right=37mm of d, yshift=0mm, text width=4.25cm] {$\existsh: v \mapsto 1, w \mapsto 0$\\[-1mm] $\existsm: v \mapsto 1, w \mapsto 0$};
            
	   \draw[dotted] (a) -- (blaa);
	   \draw[dotted] (b) -- (blab);
	   \draw[dotted] (c) -- (blac);
	   \draw[dotted] (d) -- (blad);
	\end{tikzpicture}
\end{center}
\nvspacesmall
\caption{Example of two accesses in a loop that are definitely unknown. May/Must and EH/EM analysis results are given next to the respective control locations.}\label{fig:defunknown}
\nvspacesmall
\end{figure}

\paragraph{Example.} Let us now consider a small example.
In Figure~\ref{fig:defunknown}, we see a small control-flow graph corresponding to a loop that repeatedly accesses memory blocks $v$ and~$w$.
Assume the cache is empty before entering the loop.
Then, the accesses to $v$ and $w$ are definitely unknown in fully-associative caches of associativity 2 or greater: they both miss in the first loop iteration, while they hit in all subsequent iterations.
Applying standard may and must analysis, both accesses are soundly classified as ``unknown''.
On the other hand, applying the EH analysis, we can determine that there are cases where $v$ and $w$ hit.
Similarly, the EM analysis derives that there exist executions in which they miss.
Combining those two results, the two accesses can safely be classified as definitely unknown.

\todo{might raise the question about loop peeling. should we counter that proactively? we could say that loop peeling eliminates this example, but others remain.}

\paragraph{}We will now define these analyses and their underlying domains more formally. 
The EH analysis maintains upper bounds on the minimal ages of blocks.
In addition, it includes a must analysis to obtain upper bounds on all possible ages of blocks, which are required for precise updates.
Thus the domain for abstract cache states under the EH analysis is
$\mathcal{A}_{\existsh} = (M \rightarrow \{0, \dots, k-1, k\}) \times \mathcal{A}_{\must}$.
Similarly, the EM analysis maintains lower bounds on the maximal ages of blocks and includes a regular may analysis:
$\mathcal{A}_{\existsm} = (M \rightarrow \{0, \dots, k-1, k\}) \times \mathcal{A}_{\may}$.
In the following, for reasons of brevity, we limit our exposition to the EH analysis.
The EM formalization is analogous and can be found in \ifthenelse{\boolean{conferenceversion}}{the technical report~\cite{technical_report}}{the appendix}.

%We thus use a form of \emph{hypercollecting semantics} \cite{Assaf:2017:HSA:3009837.3009889}. \todo{should still cite assaf et al. somewhere but maybe not here}

The properties we wish to establish, i.e. bounds on minimal and maximal ages, are actually \emph{hyperproperties}~\cite{ClarksonCSF2008}: they are not properties of individual reachable states but rather of the entire \emph{set} of reachable states.
Thus, the conventional approach in which abstract states concretize to sets of concrete states that are a superset of the actual set of reachable states is {not applicable}.
Instead, we express the meaning, $\gamma_{\existsh}$, of abstract states by \emph{sets of sets} of concrete states.
A set of states $Q$ is represented by an abstract EH state $(\hat{q}, \hat{q}_{\must})$, if for each block~$b$, $\hat{q}(b)$ is an upper bound on $b$'s minimal age in $Q$, $\min_{q \in Q} q(b)$:
{\small\begin{align}
		\gamma_{\existsh}:	\mathcal{A}_{\existsh}		& \rightarrow	\mathcal{P}(\mathcal{P}(C))\nonumber\\
				(\hat{q}, \hat{q}_{\must})	& \mapsto 	\big\{Q \subseteq \gamma_{\must}(\hat{q}_{\must}) \mid \forall b \in M: \min_{q \in Q} q(b) \leq \hat{q}(b)\big\}% \\
%		\alpha_{\existsh}:	\mathcal{C} 		& \rightarrow 	\mathcal{A}_{\existsh} \\
%				\mathcal{Q}			& \mapsto	\left(\lambda b. \max_{Q \in \mathcal{Q}} \min_{q \in Q} q(b), \lambda b. \max_{Q \in \mathcal{Q}} \max_{q \in Q} q(b)\right)
\end{align}}
The actual set of reachable states is an element rather than a subset of this concretization.
The concretization for the must analysis, $\gamma_{\must}$, is simply lifted to this setting.
Also note that---possibly contrary to initial intuition---our abstraction cannot be expressed as an underapproximation, as different blocks' minimal ages may be attained in different concrete states.

\begin{comment}
\begin{definition}
The  concretization maps are:
%Then, we can define abstraction and concretization functions between concrete and abstract domains:
%\begin{definition} $\alpha_{\existsh}$, $\alpha_{\existsm}$, $\gamma_{\existsh}$, $\gamma_{\existsm}$
\begin{align*}\scriptsize
		\gamma_{\existsh}:	\mathcal{A}_{\existsh}		& \rightarrow	\mathcal{C} \\
				(\hat{q}, \hat{q}_{\must})	& \mapsto 	\{Q \subseteq \gamma_{\must}(\hat{q}_{\must}) \mid \forall b \in M: \min_{q \in Q} q(b) \leq \hat{q}(b)\}% \\
%		\alpha_{\existsh}:	\mathcal{C} 		& \rightarrow 	\mathcal{A}_{\existsh} \\
%				\mathcal{Q}			& \mapsto	\left(\lambda b. \max_{Q \in \mathcal{Q}} \min_{q \in Q} q(b), \lambda b. \max_{Q \in \mathcal{Q}} \max_{q \in Q} q(b)\right)
\end{align*}
\begin{align*}
		\gamma_{\existsm}:	\mathcal{A}_{\existsm}		& \rightarrow	\mathcal{C} \\
				(\hat{q}, \hat{q}_{\may})	& \mapsto	\{Q \subseteq \gamma_{\may}(\hat{q}_{\may}) \mid \forall b \in M: \hat{q}(b) \leq \max_{q \in Q} q(b) \} %\\
%		\alpha_{\existsm}:	\mathcal{C}		& \rightarrow	\mathcal{A}_{\existsm} \\
%				\mathcal{Q}			& \mapsto	\left(\lambda b. \min_{Q \in \mathcal{Q}} \max_{q \in Q} q(b), \lambda b. \min_{Q \in \mathcal{Q}} \min_{q \in Q} q(b)\right)
\end{align*}
\end{definition}
\end{comment}

\begin{comment}
\begin{restatable}{lemma}{EHEMGaloisConnection}
	$(\alpha_{\existsh}, \gamma_{\existsh})$ and $(\alpha_{\existsm}, \gamma_{\existsm})$ are Galois connections.
\end{restatable}
\end{comment}

The abstract transformer $update_{\existsh}((\hat{q}_{\existsh}, \hat{q}_{\must}), b)$ corresponding to an access to block $b$ is the pair $(\hat{q}_{\existsh}',update_{\must}(\hat{q}_{\must}, b))$, where
{\small\begin{eqnarray}
\hat{q}_{\existsh}' = \lambda b'.
		\begin{cases}
			0		& \text{ if } b' = b\\
			\hat{q}(b')	& \text{ if } \hat{q}_{\must}(b) \leq \hat{q}(b')\\
			\hat{q}(b')+1	& \text{ if } \hat{q}_{\must}(b) > \hat{q}(b') \wedge \hat{q}(b') < k\\
			k		& \text{ if } \hat{q}_{\must}(b) > \hat{q}(b') \wedge \hat{q}(b') = k\\
		\end{cases}
\end{eqnarray}}
Let us explain the four cases in the transformer above.
After an access to~$b$, $b$'s age is $0$ in all possible executions. Thus, $0$ is also a safe upper bound on its minimal age (case 1). 
The access to $b$ may only increase the ages of younger blocks (because of the LRU replacement policy). 
In the cache state in which $b'$ attains its minimal age, it is either younger or older than $b$.
If it is younger, then the access to $b$ may increase $b'$'s actual minimal age, but not beyond $\hat{q}_{\must}(b)$, which is a bound on $b$'s age in every cache state, and in particular in the one where $b'$ attains its minimal age. Otherwise, if $b'$ is older, its minimal age remains the same and so may its bound. This explains why the bound on $b'$'s minimal age does not increase in case 2.
Otherwise, for safe upper bounds, in cases 3 and~4, the bound needs to be increased by one, unless it has already reached $k$.

\begin{restatable}[Local Consistency]{lemma}{ConsistencyUpdateEH}\label{lem:localconsistency}
The abstract transformer $update_{\existsh}$ soundly approximates its concrete counterpart $update^C$:
\begin{multline}
		\forall (\hat{q}, \hat{q}_{\must}) \in \mathcal{A}_{\existsh}, \forall b \in M, \forall Q \in \gamma_{\existsh}(\hat{q}, \hat{q}_{\must}):\\
 update^C (Q, b) \in \gamma_{\existsh}(update_{\existsh}((\hat{q}, \hat{q}_{\must}), b)).
\end{multline}%
%\noindent and \emph{mutatis mutandis} with EM.
\end{restatable}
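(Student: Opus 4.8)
The plan is to verify the two conjuncts defining membership in $\gamma_{\existsh}$ separately: that the updated set of concrete states still lies below the must-component, and that for every block the bound on the minimal age is respected. Write $update_{\existsh}((\hat{q}, \hat{q}_{\must}), b) = (\hat{q}', update_{\must}(\hat{q}_{\must}, b))$, and fix $Q \in \gamma_{\existsh}(\hat{q}, \hat{q}_{\must})$. We may assume $Q$ is nonempty so that the minima are well-defined; an empty $Q$ maps to an empty updated set and is handled trivially. For the must part, since $Q \subseteq \gamma_{\must}(\hat{q}_{\must})$ and $update^C$ is monotone with respect to set inclusion, we get $update^C(Q, b) \subseteq update^C(\gamma_{\must}(\hat{q}_{\must}), b)$, which by the (standard) local consistency of the must transformer underlying the cited soundness theorem is contained in $\gamma_{\must}(update_{\must}(\hat{q}_{\must}, b))$. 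This discharges the first conjunct.

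The heart of the argument is the second conjunct: $\min_{q' \in update^C(Q,b)} q'(b') \le \hat{q}'(b')$ for every block $b'$. Since $update^C(Q,b) = \{update(q,b) \mid q \in Q\}$, it suffices to exhibit, for each $b'$, a single witness $q \in Q$ with $update(q,b)(b') \le \hat{q}'(b')$, because the minimum is bounded above by the value at any particular witness. The natural witness is a state $q^{*} \in Q$ attaining $b'$'s minimal age, so that $q^{*}(b') \le \hat{q}(b')$ by hypothesis; moreover $q^{*}(b) \le \hat{q}_{\must}(b)$ because $q^{*} \in \gamma_{\must}(\hat{q}_{\must})$. I would then case-split along the four branches of $\hat{q}'$, and inside each branch on whether $q^{*}(b')$ lies above or below $q^{*}(b)$ in the concrete update rule. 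For $b' = b$ both sides equal $0$. When $\hat{q}_{\must}(b) \le \hat{q}(b')$ (case 2), either $q^{*}(b') \ge q^{*}(b)$ and the concrete age is unchanged, or $q^{*}(b') < q^{*}(b) \le \hat{q}_{\must}(b) \le \hat{q}(b')$, whence $q^{*}(b') < \hat{q}(b') \le k$ forces the increment branch and yields $q^{*}(b') + 1 \le \hat{q}(b') = \hat{q}'(b')$. In the two remaining branches the concrete update raises the age by at most one and caps at $k$, which matches $\hat{q}'(b') = \hat{q}(b') + 1$ and $\hat{q}'(b') = k$ respectively, using $q^{*}(b') \le \hat{q}(b')$.

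The main obstacle, and the only genuinely delicate point, is precisely case 2 ($\hat{q}_{\must}(b) \le \hat{q}(b')$), where the bound must be shown \emph{not} to increase even though a concrete state may see $b'$ promoted by the access. This is exactly where the must-component $\hat{q}_{\must}(b)$ is indispensable: it certifies that in every concrete state, and in particular in $q^{*}$, the age of $b$ is at most $\hat{q}_{\must}(b) \le \hat{q}(b')$, so $b'$ can be promoted only when it was already strictly younger than $\hat{q}(b')$, leaving headroom for the increment. Without this bound the inequality would fail, which is the structural reason the EH domain pairs the minimal-age map with a must analysis. Assembling the block-wise bound with the must-inclusion established first gives $update^C(Q,b) \in \gamma_{\existsh}(update_{\existsh}((\hat{q}, \hat{q}_{\must}), b))$, as required.
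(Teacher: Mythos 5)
Your proof is correct and follows essentially the same route as the paper's: the must-component is discharged by the standard local consistency of $update_{\must}$, and the minimal-age bound is established by picking a concrete witness $q^{*}\in Q$ with $q^{*}(b')\le\hat{q}(b')$, applying the concrete update to it, and case-splitting over the four branches of the abstract transformer, with the chain $q^{*}(b') < q^{*}(b) \le \hat{q}_{\must}(b) \le \hat{q}(b')$ doing the work in the delicate second case exactly as in the paper. The only cosmetic difference is that the paper sub-splits that case on $\widetilde{q}_0(b')$ versus $\hat{q}_{\must}(b)$ rather than directly on $q^{*}(b')$ versus $q^{*}(b)$, which changes nothing substantive.
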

%The above lemma can be understood as follows: if 

How are EH states combined at control-flow joins? The standard must join can be applied for the must analysis component.
In the concrete, the union of the states reachable along all incoming control paths is reachable after the join.
It is thus safe to take the \emph{minimum} of the upper bounds on minimal ages:
\begin{equation}
(\hat{q}_1, \hat{q}_{\must1}) \sqcup_{\existsh} (\hat{q}_2, \hat{q}_{\must2}) =
(\lambda b. \min(\hat{q}_1(b), \hat{q}_2(b)), \hat{q}_{\must1} \sqcup_{\must} \hat{q}_{\must2})
\end{equation}

\begin{restatable}[Join Consistency]{lemma}{ConsistencyJoinEH}\label{lem:joinconsistency}
	The join operator $\sqcup_{\existsh}$ is correct:
	\begin{multline}
			\forall ((\hat{q}_1, \hat{q}_{M1}), (\hat{q}_2, \hat{q}_{M2})) \in \mathcal{A}_{\existsh}^2, Q_1 \in \gamma_{\existsh}(\hat{q}_1, \hat{q}_{M1}), Q_2 \in \gamma_{\existsh}(\hat{q}_2, \hat{q}_{M2}):\\
			Q_1 \cup Q_2 \in \gamma_{\existsh}((\hat{q}_1, \hat{q}_{M1}) \sqcup_{\existsh} (\hat{q}_2, \hat{q}_{M2})).
	\end{multline}
%\noindent and \emph{mutatis mutandis} with EM.
\end{restatable}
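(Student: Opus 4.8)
The plan is to unfold the definition of $\gamma_{\existsh}$ and verify the two conjuncts it imposes on $Q_1 \cup Q_2$ separately. Writing the join result as $(\hat{q}, \hat{q}_{\must})$, so that $\hat{q} = \lambda b.\min(\hat{q}_1(b), \hat{q}_2(b))$ and $\hat{q}_{\must} = \hat{q}_{M1} \sqcup_{\must} \hat{q}_{M2}$, membership $Q_1 \cup Q_2 \in \gamma_{\existsh}(\hat{q}, \hat{q}_{\must})$ requires (i) $Q_1 \cup Q_2 \subseteq \gamma_{\must}(\hat{q}_{\must})$ and (ii) $\min_{q \in Q_1 \cup Q_2} q(b) \le \hat{q}(b)$ for every block $b$. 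I would dispatch these two obligations in turn.

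For (i) I would appeal to the must-analysis component. Since $\hat{q}_{\must}$ is the pointwise maximum of $\hat{q}_{M1}$ and $\hat{q}_{M2}$, we have $\hat{q}_{Mi}(m) \le \hat{q}_{\must}(m)$ for $i \in \{1,2\}$ and all $m$, whence $\gamma_{\must}(\hat{q}_{Mi}) \subseteq \gamma_{\must}(\hat{q}_{\must})$ by monotonicity of $\gamma_{\must}$ in the pointwise order. As $Q_i \subseteq \gamma_{\must}(\hat{q}_{Mi})$ by hypothesis, both $Q_1$ and $Q_2$, hence their union, lie in $\gamma_{\must}(\hat{q}_{\must})$; this is exactly the standard soundness of the must join lifted to the sets-of-sets setting.

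For (ii) the key observation is that the minimum over a union splits: for each block $b$, $\min_{q \in Q_1 \cup Q_2} q(b) = \min\bigl(\min_{q \in Q_1} q(b),\, \min_{q \in Q_2} q(b)\bigr)$. From the hypotheses $Q_1 \in \gamma_{\existsh}(\hat{q}_1, \hat{q}_{M1})$ and $Q_2 \in \gamma_{\existsh}(\hat{q}_2, \hat{q}_{M2})$ I get $\min_{q \in Q_1} q(b) \le \hat{q}_1(b)$ and $\min_{q \in Q_2} q(b) \le \hat{q}_2(b)$. Using monotonicity of $\min$ in both arguments, namely that $a_1 \le b_1$ and $a_2 \le b_2$ imply $\min(a_1,a_2) \le \min(b_1,b_2)$, these combine to $\min_{q \in Q_1 \cup Q_2} q(b) \le \min(\hat{q}_1(b), \hat{q}_2(b)) = \hat{q}(b)$, which is precisely (ii). Intuitively, because the existential component bounds minimal ages \emph{from above}, taking $\min$ at a join is the sound operation, dual to the $\max$ used for the must component that bounds all ages.

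There is essentially no deep obstacle here; once the concretization is split into its two conjuncts the verification is direct. The single point deserving a moment's care is the decomposition of $\min$ over the union, which presupposes both arguments are well defined: this holds because no EH state concretizes to the empty set (the bounds $\hat{q}(b)$ are finite, at most $k$, whereas the convention $\min_{q \in \emptyset} q(b) = \infty$ would violate condition (ii)), so $Q_1$ and $Q_2$ are necessarily nonempty. Conceptually the lemma is the mirror image of Lemma~\ref{lem:localconsistency}: local consistency governs how the abstraction evolves under a single access, whereas here we only need that the set union arising at a control-flow join is soundly over-approximated componentwise, by $\min$ on the minimal-age bounds and by $\sqcup_{\must}$ on the must part.
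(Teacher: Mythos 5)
Your proof is correct and takes essentially the same route as the paper's: for each block $b$ it bounds $\min_{q \in Q_1 \cup Q_2} q(b)$ by both $\hat{q}_1(b)$ and $\hat{q}_2(b)$ (the paper does this via $Q_i \subseteq Q_1 \cup Q_2$, you via splitting the $\min$ over the union — the same elementary fact) and then concludes with the pointwise $\min$, delegating the containment $Q_1 \cup Q_2 \subseteq \gamma_{\must}(\hat{q}_{M1} \sqcup_{\must} \hat{q}_{M2})$ to the soundness of the must join. Your added remark that $Q_1$ and $Q_2$ are necessarily nonempty is a point of care the paper leaves implicit, but it does not change the argument.
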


Given a control-flow graph $G=(V, E)$, the \emph{abstract EH semantics} is defined as the least solution to the following set of equations, where $R_{\existsh} : V \rightarrow \mathcal{A}_{\existsh}$ denotes the abstract cache configuration associated with each program location, and $R^C_0(v) \in \gamma_{\existsh}(R_{\existsh,0}(v))$ denotes the initial abstract cache configuration:
\begin{equation}
 	\forall v' \in V: R_{\existsh}(v') = R_{\existsh,0}(v') \sqcup_{\existsh} \bigsqcup_{(v, b, v') \in E} update_{\existsh}(R_{\existsh}(v), b).
\end{equation}
It follows from Lemmas~\ref{lem:localconsistency} and~\ref{lem:joinconsistency} that the abstract EH semantics includes the actual set of reachable concrete states: 
\begin{restatable}[Analysis Soundness]{theorem}{SoundnessEHEM}
	The abstract EH semantics includes the collecting semantics:
	$\forall v \in V: R^C(v) \in \gamma_{\existsh}(R_{\existsh}(v))$.
\end{restatable}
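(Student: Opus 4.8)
The plan is to prove the theorem by fixpoint induction, carrying the concrete and abstract fixpoint computations in lockstep and using the two preceding lemmas as the engine of the inductive step. Both $R^C$ and $R_{\existsh}$ are defined as least solutions of monotone equation systems over the lattices $V \to \mathcal{P}(C)$ and $V \to \mathcal{A}_{\existsh}$ respectively. Since the set of blocks occurring in $G$ is finite and ages are truncated at $k$, both lattices are finite, so the Kleene iterates $X_0 \sqsubseteq X_1 \sqsubseteq \cdots$ and $Y_0 \sqsubseteq Y_1 \sqsubseteq \cdots$ obtained by iterating the concrete and abstract functionals from their respective bottom elements stabilize after finitely many steps at $R^C$ and $R_{\existsh}$. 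It therefore suffices to establish the invariant $\forall v: X_n(v) \in \gamma_{\existsh}(Y_n(v))$ for every $n$ and read it off at a common stabilization index.

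First I would extend Join Consistency (Lemma~\ref{lem:joinconsistency}) from binary to finite $n$-ary joins: because $\sqcup_{\existsh}$ is associative and commutative (it is the pointwise minimum on the EH component and the must-join on the $\must$ component), a routine induction on the number of operands shows that whenever $Q_i \in \gamma_{\existsh}(a_i)$ for each member of a finite family, then $\bigcup_i Q_i \in \gamma_{\existsh}(\bigsqcup_i a_i)$. With this in hand the inductive step is immediate. Assuming $X_n(v) \in \gamma_{\existsh}(Y_n(v))$ for all $v$, Local Consistency (Lemma~\ref{lem:localconsistency}) gives $update^C(X_n(v), b) \in \gamma_{\existsh}(update_{\existsh}(Y_n(v), b))$ for every edge $(v,b,v')$, while the initial-condition hypothesis gives $R^C_0(v') \in \gamma_{\existsh}(R_{\existsh,0}(v'))$. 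The node equation for $v'$ forms $X_{n+1}(v')$ as the union of exactly these concrete sets and $Y_{n+1}(v')$ as the $\sqcup_{\existsh}$-join of exactly the matching abstract elements, so $n$-ary join consistency yields $X_{n+1}(v') \in \gamma_{\existsh}(Y_{n+1}(v'))$.

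The delicate point, and the one I expect to require the most care, is the base case and the treatment of as-yet-unreachable locations, because the empty set is not a member of $\gamma_{\existsh}(\hat{q}, \hat{q}_{\must})$: the convention $\min_{q \in \emptyset} q(b) = +\infty$ makes the defining inequality fail. The clean remedy is to work with an explicit bottom element $\bot_{\existsh}$ for which $\gamma_{\existsh}(\bot_{\existsh}) \ni \emptyset$, which is join-neutral and satisfies $update_{\existsh}(\bot_{\existsh}, b) = \bot_{\existsh}$, and to strengthen the invariant to the disjunction ``$X_n(v) = \emptyset$ and $Y_n(v) = \bot_{\existsh}$, or else $X_n(v) \in \gamma_{\existsh}(Y_n(v))$''. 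The observation that keeps the two disjuncts synchronized is that concrete and abstract reachability are both governed solely by the edges of $G$, so a location acquires a nonempty concrete iterate at precisely the iteration at which its abstract iterate first rises above $\bot_{\existsh}$; hence an empty in-edge never injects spurious, too-small age bounds into the join at a successor. The base case $n = 0$ then holds trivially via the first disjunct, and the argument of the previous paragraph carries the second disjunct through the step.

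Finally I would fix the scope of the statement: for it to hold verbatim at every $v \in V$ one either assumes that all control locations of $G$ are reachable (so that $R^C(v) \neq \emptyset$) or adopts the $\gamma_{\existsh}(\bot_{\existsh}) \ni \emptyset$ convention above, under which unreachable locations are absorbed by the first disjunct. Evaluating the invariant at the finite stabilization index then gives $R^C(v) \in \gamma_{\existsh}(R_{\existsh}(v))$ for all $v$, as required. The EM case is entirely symmetric, replacing minimal by maximal ages, $\min$ by $\max$ in the join, and the $\must$- by the $\may$-component.
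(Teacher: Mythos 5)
Your proof is correct and follows essentially the same route as the paper's: both exploit the finiteness of the domains to reduce the least fixed points to finitely many Kleene iterates and then push the invariant $R^C_i(v) \in \gamma_{\existsh}(R_{\existsh,i}(v))$ through the iteration using Lemmas~\ref{lem:localconsistency} and~\ref{lem:joinconsistency} (the latter applied $n$-arily). The only divergence is that the paper starts its iterates directly at $R^C_0$/$R_{\existsh,0}$ and simply assumes $R^C_0(v)\in\gamma_{\existsh}(R_{\existsh,0}(v))$ at every node, thereby silently sidestepping the $\emptyset\notin\gamma_{\existsh}(\cdot)$ issue that you rightly identify and repair with an explicit $\bot_{\existsh}$ and a disjunctive invariant.
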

%\todo{prove this in appendix!}
We can use the results of the EH analysis to determine that an access results in a hit in at least some of all possible executions.
This is the case if the minimum age of the block prior to the access is guaranteed to be less than the cache's associativity.
Similarly, the EM analysis can be used to determine that an access results in a miss in at least some of the possible executions.

Combining the results of the two analyses, some accesses can be classified as ``definitely unknown''. Then, further refinement by model checking is provably impossible.
Classifications as ``exists hit'' or ``exists miss'', which occur if either the EH or the EM analysis is successful but not both, are also useful to reduce further model-checking efforts:
e.g. in case of ``exists hit'' it suffices to determine by model checking whether a miss is possible to fully classify the access.

%%% Local Variables:
%%% mode: latex
%%% TeX-master: "cache_model_checking_analysis"
%%% End:

\section{Cache Analysis by Model Checking}
\label{sec:model_checking}
\emph{All proofs can be found in \ifthenelse{\boolean{conferenceversion}}{Appendix B of the technical report~\cite{technical_report}}{Appendix~\ref{sec:mc_abstract_model_proofs}}.} %
We have seen a new abstract analysis capable of classifying certain cache accesses as ``definitely unknown''.
The classical ``may'' and ``must'' analyses and this new analysis classify a (hopefully large) portion of all accesses as ``always hit'', ``always miss'', or ``definitely unknown''. But, due to the incomplete nature of the analysis the exact status of some blocks remains unknown. Our approach is summarized at a high level in Listing~\ref{lst_AI}. Functions \verb!May!, \verb!Must!, \verb!ExistsHit! and \verb!ExistsMiss! return the result of the corresponding analysis, whereas \verb!CheckModel! invokes the model checker (see Listing~\ref{lst_MC}). Note that a block that is not fully classified as ``definitely unknown'' can still benefit from the \emph{Exists Hit} and \emph{Exists Miss} analysis during the model-checking phase. If the AI phase shows that there exists a path on which the block is a hit (respectively a miss), then the model checker does not have to check the ``always miss'' (respectively ``always hit'') property.

\lstdefinestyle{pseudocode}{
  language=C,
  breaklines=true,
  basicstyle=\footnotesize\ttfamily,
  keywordstyle=\bfseries\color{blue!40!black},
  identifierstyle=\color{green!40!black},
  morekeywords={function, if, else, return},
  emph={block,exist_hit,exist_miss},
  emphstyle={\color{red!40!black}},
  commentstyle=\itshape\color{black},
  %morekeywords=[1]{\color{magenta}}{b,eh,em},
}

\begin{lstlisting}[caption={Abstract-interpretation phase},label=lst_AI,style=pseudocode]
function ClassifyBlock(block) {
  if (Must(block))      //Must analysis classifies the block   
    return AlwaysHit; 
  else if (!May(block))    //May analysis classifies the block
    return AlwaysMiss; 
  else if (ExistHit(block) && ExistMiss(block)) 
    return DefinitelyUnknown; //DU analysis classifies the block
  else // Otherwise, we call the model checker 
    return CheckModel(block, ExistsHit(block), ExistsMiss(block));
}
\end{lstlisting}
\begin{lstlisting}[caption={Model-checking phase},label=lst_MC,style=pseudocode]
function CheckModel(block, exist_hit, exist_miss) {
  if (exist_hit) { //block can not always miss 
    if (CheckAH(block)) return AlwaysHit;
  }
  else if (exist_miss) { //block can not always hit 
    if (CheckAM(block)) return AlwaysMiss;
  } else { //AI phase did not provide any information
    if (CheckAH(block)) return AlwaysHit;
    else if (CheckAM(block)) return AlwaysMiss;
  }
  return DefinitelyUnknown;
}
\end{lstlisting}

We shall now see how to classify these remaining blocks using model checking.
Not only is the model-checking phase \emph{sound}, i.e. its classifications are correct, it is also \emph{complete} relative to our control-flow-graph model, i.e. there remain no unclassified accesses: each access is classified as ``always hit'', ``always miss'' or ``definitely unknown''.
Remember that our analysis is based on the assumption that each path is semantically feasible.

In order to classify the remaining unclassified accesses, we feed the model checker a finite-state machine modeling the cache behavior of the program, composed of
\begin{inparaenum}[i)]
\item a model of the program, yielding the possible sequences of memory accesses \item a model of the cache.
\end{inparaenum}
In this section, we introduce a new cache model, focusing on the state of a particular memory block to be classified,
which we further simplify using the results of abstract interpretation.

\newcommand{\focus}{\odot}
\newcommand{\focusdomain}{{C_\focus}}
\newcommand{\focusupdate}{update_\focus}
\newcommand{\focusalpha}{\alpha_\focus}

As explained in the introduction, it would be possible to directly encode the control-flow graph of the program, adorned with memory accesses, as one big finite-state system.
A first step is obviously to slice that system per cache set to make it smaller.
Here we take this approach further by defining a model sound and complete with respect to a given memory block $a$:
parts of the model that have no impact on the caching status of $a$ are discarded, which greatly reduces the model's size.
For each unclassified access, the analysis constructs a model focused on the memory block accessed, and queries the model checker.
Both the simplified program model and the focused cache model are derived automatically, and do not require any manual interaction.

The \emph{focused cache model} is based on the following simple property of LRU: a memory block is cached if and only if its age is less than the associativity $k$, or in other words, if there are less than $k$ younger blocks.
In the following, w.l.o.g., let $a \in M$ be the memory block we want to focus the cache model on.
If we are only interested in whether $a$ is cached or not, it suffices to track the set of blocks younger than $a$. 
Without any loss in precision concerning $a$, we can abstract from the relative ages of the blocks younger than $a$ and of those older than $a$.

Thus, the domain of the focused cache model is $\focusdomain = \mathcal{P}(M) \cup \{\varepsilon\}$.
Here, $\varepsilon$ is used to represent those cache states in which $a$ is not cached.
If $a$ is cached, the analysis tracks the set of blocks younger than $a$.
We can relate the focused cache model to the concrete cache model defined in Section~\ref{sec:cache_modeling} using an abstraction function mapping concrete cache states to focused ones:
       {\small\begin{align}
		\focusalpha:	C					& \rightarrow	\focusdomain \nonumber\\
					q					& \mapsto
		\begin{cases}
			\varepsilon & \text{ if } q(a) \geq k\\
			\{b \in M \mid q(b) < q(a)\} & \text{ if } q(a) < k
		\end{cases}
	\end{align}}

The focused cache update~$\focusupdate$ models a memory access as follows:
	{\small\begin{align}
		\focusupdate: \focusdomain \times M &	 \rightarrow \focusdomain\nonumber\\
		(\widehat{Q}, b)						&	\mapsto 
%		\begin{cases}
%			\emptyset			& \text{ if } b = a\\
%			\varepsilon		& \text{ if } b \neq a \wedge \widehat{Q} = \varepsilon\\
%			\widehat{Q}		& \text{ if } b \neq a \wedge \widehat{Q} \neq \varepsilon \wedge a \in \widehat{Q}\\
%			\widehat{Q} \cup \{b\}	& \text{ if } b \neq a \wedge \widehat{Q} \neq \varepsilon \wedge a \notin \widehat{Q} \wedge |\widehat{Q}| < k - 1\\
%			\varepsilon		& \text{ if } b \neq a \wedge \widehat{Q} \neq \varepsilon \wedge a \notin \widehat{Q} \wedge |\widehat{Q}| \geq k - 1\\
%		\end{cases}
		\begin{cases}
			\emptyset			& \text{ if } b = a\\
			\varepsilon		& \text{ if } b \neq a \wedge \widehat{Q} = \varepsilon\\
			\widehat{Q} \cup \{b\}	& \text{ if } b \neq a \wedge \widehat{Q} \neq \varepsilon \wedge |\widehat{Q} \cup \{b\}| < k\\
			\varepsilon		& \text{ if } b \neq a \wedge \widehat{Q} \neq \varepsilon  \wedge |\widehat{Q} \cup \{b\}| = k\\
		\end{cases}
	\end{align}}
\todo{note that the abstract update has been slightly simplified syntactically by merging the 3rd and 4th case}
Let us briefly explain the four cases above. If $b=a$ (case 1), $a$ becomes the most-recently-used block and thus no other blocks are younger.
If $a$ is not in the cache and it is not accessed (case 2), then $a$ remains outside of the cache.
If another block is accessed, it is added to $a$'s younger set (case 3) unless the access causes $a$'s eviction, because it is the $k^{th}$ distinct younger block (case 4).

\paragraph{Example.} Figure~\ref{fig:focused} depicts a sequence of memory accesses and the resulting concrete and focused cache states (with a focus on block $a$) starting from an empty cache of associativity 2. We represent concrete cache states by showing the two blocks of age $0$ and $1$. The example illustrates that many concrete cache states may collapse to the same focused one. 
At the same time, the focused cache model does not lose any information about the caching status of the focused block, which is captured by the following lemma and theorem.
\begin{figure}[t]
\begin{center}
	\begin{tikzpicture}[scale=0.9,transform shape]
            \node[pp] (x) {};
            \node[pp] (y) [right=10mm of x] {};
            \node[pp] (a) [right=10mm of y] {};
            \node[pp] (v) [right=10mm of a] {};
            \node[pp] (w) [right=10mm of v] {};
            \node[pp] (l) [right=10mm of w] {};
        
            \ppedge{x}{y}{\lstinline[style=alang]|x|}
            \ppedge{y}{a}{\lstinline[style=alang]|y|}
            \ppedge{a}{v}{\lstinline[style=alang]|a|}
            \ppedge{v}{w}{\lstinline[style=alang]|v|}
            \ppedge{w}{l}{\lstinline[style=alang]|w|}
        
          %  \node[fill=white,left=3mm of b] {$1$};

            \node[blue,above=2mm of x, yshift=0mm] (firstconcrete) {$[-,-]$};
            \node[red,below=2mm of x, yshift=0mm, text height=2mm] (firstfocused) {$\epsilon$};
            
            \node[blue,above=2mm of y, yshift=0mm] {$[x,-]$};
            \node[red,below=2mm of y, yshift=0mm, text height=2mm] {$\epsilon$};

            \node[blue,above=2mm of a, yshift=0mm] {$[y,x]$};
            \node[red,below=2mm of a, yshift=0mm, text height=2mm] {$\epsilon$};
            
            \node[blue,above=2mm of v, yshift=0mm] {$[a,y]$};
            \node[red,below=2mm of v, yshift=0mm, text height=2mm] {$\emptyset$};
            
            \node[blue,above=2mm of w, yshift=0mm] {$[v, a]$};
            \node[red,below=2mm of w, yshift=0mm, text height=2mm] {$\{v\}$};
            
            \node[blue,above=2mm of l, yshift=0mm] {$[w, v]$};
            \node[red,below=2mm of l, yshift=0mm, text height=2mm] {$\epsilon$};
            
            \node[anchor=right, left=5mm of x, yshift=6mm] {Concrete cache model:};
            \node[anchor=right, left=5mm of x, yshift=-5.25mm] {Focused cache model:};
	\end{tikzpicture}
\end{center}%\vspace{-1.5em}
	\caption{Example: concrete vs. focused cache model.}\label{fig:focused}
\end{figure}
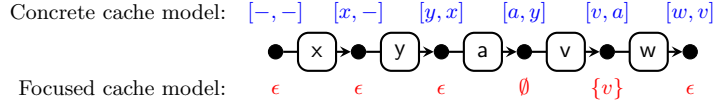

\begin{restatable}[Local Soundness and Completeness]{lemma}{MCUpdateConsistency}\label{lem:mc_update_function_consistency} The focused cache update abstracts the concrete cache update exactly:\todo{can we find a better formulation here?}
\begin{equation}
	\forall q \in C, \forall b \in M: \focusalpha(update(q, b)) = \focusupdate(\focusalpha(q), b).
\end{equation}
\end{restatable}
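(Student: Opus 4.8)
The statement is an exact-abstraction identity (soundness \emph{and} completeness), so I would prove it by a direct case analysis that evaluates both sides and checks they coincide. The natural split is first on whether the accessed block $b$ equals the focused block $a$, and then, when $b \neq a$, on whether $a$ is currently cached, i.e. whether $q(a) < k$ or $q(a) = k$ (recall ages are truncated at $k$, so $q(a) \geq k$ means $q(a) = k$).

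First I would dispatch the two easy cases, neither of which needs any structural assumption on $q$. If $b = a$, then $update(q,a)$ assigns $a$ the age $0$, so $a$ is cached and no block is strictly younger than it; hence $\focusalpha(update(q,a)) = \emptyset$, matching the first line of $\focusupdate$. If $b \neq a$ and $a$ is uncached ($q(a) = k$), then $\focusalpha(q) = \varepsilon$; on the concrete side $q(a) = k \geq q(b)$, so the access leaves $a$'s age at $k$ and $a$ stays uncached, giving $\focusalpha(update(q,b)) = \varepsilon$, which matches the second line of $\focusupdate$.

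The interesting case is $b \neq a$ with $a$ cached, where $\focusalpha(q) = \widehat{Q} = \{b' \mid q(b') < q(a)\} \neq \varepsilon$. Here I would split on whether $b$ is younger than $a$ ($q(b) < q(a)$) or older ($q(b) > q(a)$). When $b$ is younger, it already lies in $\widehat{Q}$, the concrete update leaves $a$'s age unchanged, and a routine membership argument over the definition of $update$ shows the set of blocks strictly younger than $a$ afterwards is again $\widehat{Q} = \widehat{Q} \cup \{b\}$. When $b$ is older, $b$ is new to the younger-set and $a$'s age rises by one, so the surviving younger-set becomes $\widehat{Q} \cup \{b\}$, while $a$ is evicted exactly when its incremented age reaches $k$.

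The crux, and the step I expect to be the main obstacle, is matching the two eviction conditions: the concrete update evicts $a$ precisely when $q(a)+1 = k$, whereas $\focusupdate$ takes its $\varepsilon$ branch precisely when $|\widehat{Q} \cup \{b\}| = k$. These agree only because of the LRU well-formedness invariant that, for a cached block $a$ in any \emph{reachable} cache state, $a$'s age equals the number of strictly younger blocks, $q(a) = |\widehat{Q}|$, and that cached blocks carry pairwise distinct ages (so that the younger/older dichotomy for $b \neq a$ is exhaustive). This invariant is preserved by $update$ from the well-formed initial configurations and therefore holds throughout the collecting semantics; I would accordingly carry it as an explicit hypothesis or prove it as a preliminary invariant lemma before the case analysis, since the identity genuinely relies on it and would not survive on arbitrary maps in $\focusdomain$'s concrete counterpart where two cached blocks could share an age.
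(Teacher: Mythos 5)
Your proof is correct and follows essentially the same route as the paper's: a case split on $b=a$ versus $b\neq a$, on whether $a$ is cached, and (when $a$ is cached) on whether the accessed block is younger or older than $a$, with the eviction conditions matched via the identity $q(a)=|\{b'\mid q(b')<q(a)\}|$ for cached $a$. The paper performs the same analysis but encodes that invariant implicitly by writing the reachable state as a list $[b_1,\dots,b_k]$ of pairwise distinct blocks, whereas you make the required well-formedness/reachability hypothesis explicit --- a point the lemma's statement (quantified over all $q\in C$) in fact glosses over, so your preliminary invariant lemma is a reasonable, slightly more careful packaging of the same argument.
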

\todo{we might have to discuss this more}

\begin{comment}
\begin{restatable}{lemma}{MCTransformersConsistency}\label{lem:mc_update_transformer_consistency}%
Correctness and (local) optimality of update transformers
\begin{equation*}
\forall Q \in \mathcal{C}, \forall b \in M, \alpha(update^C(Q, b))
= \widetilde{update^C}(\alpha(Q), b)
\end{equation*}
\end{restatable}

We need one more easy consistency lemma and definitions of reachable states before stating the main theorem.
\begin{lemma}
For all $Q_1,Q_2 \in \mathcal{C}$, $\alpha(Q_1 \cup Q_2) = \alpha(Q_1) \sqcup \alpha(Q_2)$.
\end{lemma}
\end{comment}

The \emph{focused collecting semantics} is defined analogously to the \emph{collecting semantics} as the least solution to the following set of equations, where $R^C_\focus(v)$ denotes the set of reachable focused cache configurations at each program location, and $R^C_{\focus,0}(v) = \alpha^C_\focus(R^C_0(v))$ for all $v \in V$:
\begin{equation}\label{def:focused_concrete_reachable}
 	\forall v' \in V: R^C_\focus(v') = R^C_ {\focus,0}(v') \cup \bigcup_{(v,b,v') \in E} update^C_\focus(R^C_\focus(v), b),
\end{equation}
where $update^C_\focus$ denotes the focused cache update function lifted to sets of focused cache states, i.e., $update^C_\focus(Q, b) = \{\focusupdate(q, b) \mid q \in Q\}$, and $\alpha^C_\focus$ denotes the abstraction function lifted to sets of states, i.e., $\alpha^C_\focus(Q) = \{\alpha_\focus(q) \mid q \in Q\}$.

%\begin{definition}
%Let $R^C_\focus(v)$ denote the set of reachable configurations at location~$v$.
%Then the set of abstract reachable configurations is the least solution of
%\begin{equation}
%R^C_\focus(v') = R^C_ {\focus,0}(v') \cup \bigcup_{(v,b,v') \in E} update^C_\focus(R^C_\focus(v), b),
%\end{equation}
%where $update^C_\focus$ denotes the focused cache update function lifted to sets of focused cache states: $update^C_\focus(Q, b) = \{\focusupdate(q, b) \mid q \in Q\}$.
%\end{definition}

\begin{theorem}[Analysis Soundness and Completeness]
The focused collecting semantics is exactly the abstraction of the collecting semantics:
\begin{equation}
	\forall v \in V: \alpha^C_\focus(R^C(v)) = R^C_\focus(v).     
\end{equation}
\end{theorem}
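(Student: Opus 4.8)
The plan is to exploit the \emph{exact} commutation between the lifted abstraction $\alpha^C_\focus$ and the two transfer functions, and then transport that commutation through the two fixpoint computations. Write $F$ for the concrete one-step operator on $V \to \mathcal{P}(C)$ given by the right-hand side of~(\ref{def:concrete_reachable}), and $G$ for the focused one-step operator on $V \to \mathcal{P}(\focusdomain)$ given by the right-hand side of~(\ref{def:focused_concrete_reachable}); by definition $R^C = \mathrm{lfp}(F)$ and $R^C_\focus = \mathrm{lfp}(G)$. Since $C$ and $\focusdomain$ are finite, both operators act on finite complete lattices ordered by pointwise inclusion, are monotone, and their least fixpoints are reached by the eventually-constant Kleene chains $F^n(\bot)$ and $G^n(\bot)$ starting from $\bot = \lambda v.\,\emptyset$; in particular $R^C(v) = \bigcup_n F^n(\bot)(v)$ and $R^C_\focus(v) = \bigcup_n G^n(\bot)(v)$.

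First I would record two elementary facts about $\alpha^C_\focus(Q) = \{\alpha_\focus(q) \mid q \in Q\}$. It preserves arbitrary unions, $\alpha^C_\focus\big(\bigcup_i Q_i\big) = \bigcup_i \alpha^C_\focus(Q_i)$ (and in particular $\alpha^C_\focus(\emptyset) = \emptyset$), because the image of a union is the union of images. And it commutes with the update: for all $Q \subseteq C$ and $b \in M$,
\begin{equation*}
\alpha^C_\focus(update^C(Q,b)) = \{\alpha_\focus(update(q,b)) \mid q \in Q\} = \{\focusupdate(\alpha_\focus(q),b) \mid q \in Q\} = update^C_\focus(\alpha^C_\focus(Q),b),
\end{equation*}
where the middle equality is precisely Lemma~\ref{lem:mc_update_function_consistency} applied pointwise. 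Combining these two facts with the defining identity $R^C_{\focus,0}(v) = \alpha^C_\focus(R^C_0(v))$ for the initial configurations yields the operator-level commutation: for every $R : V \to \mathcal{P}(C)$ and every $v' \in V$,
\begin{equation*}
\alpha^C_\focus\big(F(R)(v')\big) = G\big(\alpha^C_\focus \circ R\big)(v'),
\end{equation*}
obtained simply by pushing $\alpha^C_\focus$ through the outer union and through each $update^C(R(v),b)$ on the right-hand side of~(\ref{def:concrete_reachable}).

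Then I would transfer this commutation to the fixpoints by induction on $n$, proving $\alpha^C_\focus \circ F^n(\bot) = G^n(\bot)$: the base case $n=0$ is $\alpha^C_\focus(\emptyset) = \emptyset$, and the inductive step uses the operator-level commutation together with the induction hypothesis, $\alpha^C_\focus \circ F^{n+1}(\bot) = G(\alpha^C_\focus \circ F^n(\bot)) = G^{n+1}(\bot)$. Taking unions over $n$ and again using that $\alpha^C_\focus$ preserves unions gives, for each $v$, $\alpha^C_\focus(R^C(v)) = \alpha^C_\focus\big(\bigcup_n F^n(\bot)(v)\big) = \bigcup_n G^n(\bot)(v) = R^C_\focus(v)$, which is the claim.

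The one point requiring care --- and the crux separating this completeness statement from mere soundness --- is that every step above is an \emph{equality} rather than an inclusion, and this hinges entirely on Lemma~\ref{lem:mc_update_function_consistency} being an exact identity $\alpha_\focus(update(q,b)) = \focusupdate(\alpha_\focus(q),b)$. Had the local lemma delivered only ``$\subseteq$'', the induction would degrade to $\alpha^C_\focus(R^C(v)) \subseteq R^C_\focus(v)$ and the reverse inclusion (no spurious focused states are generated) would fail. I therefore expect the real work to sit inside Lemma~\ref{lem:mc_update_function_consistency}, which is a finite case analysis according to whether the accessed block equals $a$, whether $a$ is currently cached, and whether the accessed block is younger or older than $a$; once that exact lemma is in hand, the fixpoint-transfer argument sketched here is routine.
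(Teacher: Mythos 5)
Your proposal is correct and follows essentially the same route as the paper's own (much terser) proof: it lifts the exact pointwise commutation of Lemma~\ref{lem:mc_update_function_consistency} to the set level, observes that $\alpha^C_\focus$ distributes over unions, and transports the resulting operator-level equality through the Kleene iterations of the two fixpoint systems by induction, using finiteness of the domains for termination. Your explicit remark that every step must be an equality (not an inclusion) for completeness to survive is exactly the point the paper leaves implicit, so nothing further is needed.
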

\begin{proof}
From Lemma~\ref{lem:mc_update_function_consistency} it immediately follows that the lifted focused update $update^C_\focus$ exactly corresponds to the lifted concrete cache update $update^C$.

Since the concrete domain is finite, the least fixed point of the system of equations of Def.~\ref{def:concrete_reachable} is reached after a bounded number of Kleene iterations. 
One then just applies the consistency lemmas in an induction proof.\qed
\end{proof}

Thus we can employ the focused cache model in place of the concrete cache model without any loss in precision to classify accesses to the focused block as ``always hit'', ``always miss'', or ``definitely unknown''.

For the program model, we simplify the CFG without affecting the correctness nor the precision of the analysis:
\begin{inparaenum}[i)]
\item If we know, from may analysis, that in a given program instruction $a$ is never in the cache, then this instruction cannot affect $a$'s eviction: thus we simplify the program model by not including this instruction. 
\item When we encode the set of blocks younger than $a$ as a bit vector, we do not include blocks that the may analysis proved not to be in the cache at that location: these bits would anyway always be~$0$.
\end{inparaenum}

%\todo{so we drop the second improvement, which was to not consider accesses to blocks that are not in the may cache at the access in question!?}

%%% Local Variables:
%%% mode: latex
%%% TeX-master: "cache_model_checking_analysis"
%%% End:

\section{Related Work}
\label{sec:related_work}
\todo{do we actually compare our results with those of the related work quantitatively? if not, move this section to after the experiments!?}

Earlier work by Chattopadhyay and Roychoudhury~\cite{Cha_rts13} refines memory accesses classified as ``unknown'' by AI using a software model-checking step:
when abstract interpretation cannot classify an access, the source program is enriched with annotations for counting conflicting accesses and run through a software model checker (actually, a bounded model checker).\todo{what is the consequence of using a bounded model checker in that case?}
Their approach, in contrast to ours, takes into account program semantics during the refinement step;
it is thus likely to be more precise on programs where many paths are infeasible for semantic reasons.
Our approach however scales considerably better, as shown in Section~\ref{sec:experimental_evaluation}:
not only do we not keep the program semantics in the problem instance passed to the model checker, which thus has finite state as opposed to being an arbitrarily complex program verification instance, we also strive to minimize that instance by the methods discussed in Section~\ref{sec:model_checking}.\todo{do we actually perform scalability comparisons?}

Chu et al.~\cite{Chu_rtas16} also refine cache analysis results based on program semantics, but by symbolic execution, where an SMT solver is used to prune infeasible paths.
We also compare the scalability of their approach to ours.

Our work complements \cite{Lv_rtss10}, which uses the classification obtained by classical abstract interpretation of the cache as a basis for WCET analysis on timed automata:
our refined classification would increase precision in that analysis.
Metta et al.~\cite{Met_lctes16} also employ model checking to increase the precision of WCET analysis. However, they do not take into account low-level features such as caches.

\section{Experimental Evaluation}
\label{sec:experimental_evaluation}

% \begin{figure}[ht]
%  \includegraphics[width=\sizefactor\textwidth]{}
%  \caption{Number of calls to the MC with and without the Definitely Unknown Analysis.}
%  \label{fig_time_ai_DU}
% \end{figure}

%We implemented the analyses from Sections~\ref{sec:abstract_interpretation} and 
%\ref{sec:model_checking} as well as classical cache analysis by abstract 
%interpretation.

In industrial use for worst-case execution time, cache analysis targets a 
specific processor, specific cache settings, specific binary code loaded at a 
specific address.
The processor may have a hierarchy of caches and other peculiarities.
Loading object code and reconstructing a control-flow graph involves dedicated 
tools.
For data caches, a pointer value analysis must be run.
Implementing an industrial-strength 
analyzer including a pointer value analysis, or even interfacing in an existing 
complex analyzer, would greatly exceed the scope of this article.
For these reasons, our analysis applies to a single-level LRU instruction cache, and 
operates at LLVM bitcode level, each LLVM opcode considered as an elementary instruction.
This should be representative of analysis of machine code over LRU caches at a 
fraction of the engineering cost.

We implemented the classical may and must analyses, as well as our new definitely-unknown analysis and our conversion to model checking.
The model-checking problems are produced in the NuSMV format, then fed 
to  nuXmv~\cite{DBLP:conf/cav/CavadaCDGMMMRT14}.%
\footnote{\url{https://nuxmv.fbk.eu/}:
nuXmv checks for reachability using Kleene iterations over sets of 
states implicitly represented by binary decision diagrams (BDDs).
We also tried nuXmv's implementation of the IC3 algorithm with no speed improvement.}
We used an Intel Core i3-2120 processor (3.30~GHz) with 8~GiB RAM.

Our experimental evaluation is intended to show
\begin{inparaenum}[i)]
\item
precision gains by model checking (number of unknowns at the may/must stage vs. after the full analysis)
\item the usefulness of the definitely-unknown analysis 
(number of definitely-unknown accesses, which corresponds to the reduced number of MC calls, reduced MC cumulative execution time),
\item the 
global analysis efficiency (impact on analysis execution time, reduced number of MC calls).
\end{inparaenum}

\begin{figure}[t]
 \begin{center}
 \includegraphics[width=\sizefactor\textwidth]{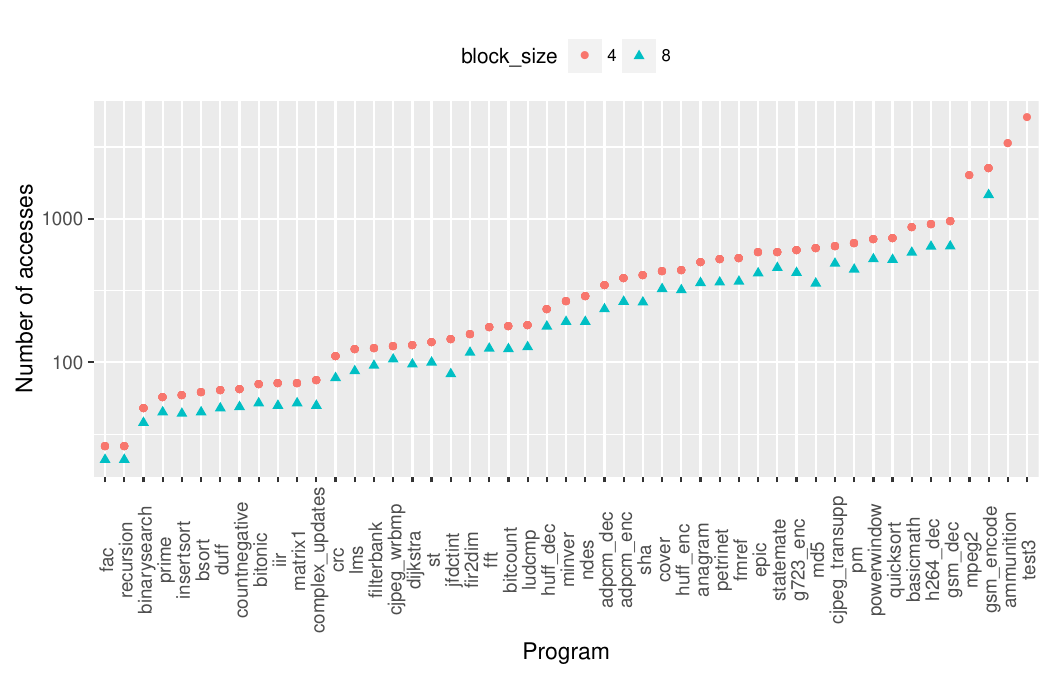}
 \end{center}
 \caption{Size of benchmarks in CFG blocks of 4 and 8 LLVM instructions.}
  \label{fig_benchs}
\end{figure}

As analysis target we use the 
TACLeBench benchmark suite~\cite{FalkWCET2016}\footnote{\url{http://www.tacle.eu/index.php/activities/taclebench}}, the successor of the M\"alardalen benchmark suite, which is commonly used in experimental evaluations of WCET analysis techniques. 
Figure~\ref{fig_benchs} (log. scale) gives the number of blocks in the control flow graph where a block is a sequence of instructions that are mapped to the same memory block.
In all experiments, we assume the cache to be initially empty and  we chose the following cache configuration: 8 instructions per block, 4 ways, 8 cache sets. \todo{this is very small, right?}
More details on the sizes of the benchmarks and further experimental results (varying cache configuration, detailed numbers for each benchmark,...) may be found in the technical report~\cite{technical_report}.

\newcommand{\nvspace}{\vspace{-6mm}}

\subsection{Effect of Model Checking on Cache Analysis Precision}
% \redtodo{unfortunately, this subsection is by far the weakest part of the whole paper. any improvement would be highly valuable.}

\begin{figure}[t]
\begin{center}
 \includegraphics[width=\sizefactor\textwidth]{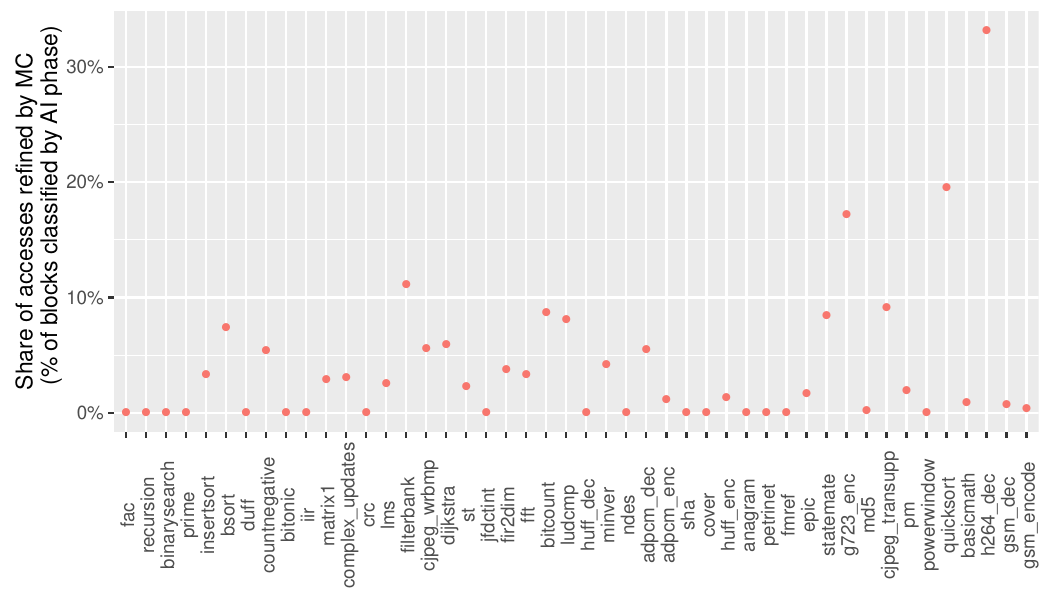}
 \end{center}
% \nvspace
 \caption{Increase in hit/miss classifications due to MC relative to pure AI-based analysis.}
 \label{fig_imp}
% \nvspace
\end{figure}

\todo{usually improvements are measured relative to previous results. so it would be more insightful to give relative improvements rather than absolute ones here. DONE: ratio}

\todo{percentage of additional hit and/or miss classifications over AI. if there are too many data points, maybe we can give geometric means over all benchmarks instead? Geom means would be zero because for some benchmark, improvement is null : MAY BE DONE; would have to look at ratio before/after, neither is zero}

Here we evaluate the improvement in the number of accesses 
classified as ``always hit'' or ``always miss''. 
%From the classical may/must analysis, memory accesses are classified into three 
%sets: hits, miss and unknown where unknown means that the analysis is unable to 
%say if it is a miss, a hit or both are possible. 
%From our full analysis we aim at classifying more accesses as hit and miss. \todo{no need to repeat what the paper is about}
%
In Figure \ref{fig_imp} we show by what percentage the number of such classifications increased from the pure AI phase due to model checking.\todo{important: what configuration are we talking about here? size of the cache, number of ways, block size... i copied the text below from even further below: does this apply to all the results in the whole section? then it should be mentioned earlier.DONE} 

%For all \todo{how is this? i mean, just 2 sentences before we say that we vary number of ways, sets, memory block size... this is utterly confusing; and also this is very, very tiny} experiments the cache configuration is 8 instructions/block, 4 ways, and 8 sets. 
As can be observed in the figure, more than 60\% of the benchmarks show an improvement and this improvement is greater than 5\% for 45\% of them.
%As can be observed in the figure, the improvements range from no additional classifications to 33\% additional classifications in case of \benchmarkname{h264\_dec}. 

We performed the same experiment under varying cache configurations (number of ways, number of sets, memory-block size) with similar outcomes.

%  350 blocks or 17\% of the program blocks \todo{the absolute number CANNOT be interpreted by readers}
%(see Appendix \ref{app_eval} for more details).
% This is due to the fact that loops and functions size may nicely fit in the 
% cache depending on the configuration: for instance, when a 4-block loop fits 
% perfectly in a 4-way cache,\todo{this makes no sense. a 4-block loop will be spread over multiple sets, and thus it doesn't have to be 4-way associative.} the may and must analyses perfectly classify all 
% blocks in the loop as hit or miss and the model checker is not called for those 
% blocks. \todo{how does this explain anything? i don't see any logical reasoning here}

\subsection{Effect{\nhspace}of{\nhspace}the{\nhspace}Definitely-Unknown{\nhspace}Analysis{\nhspace}on{\nhspace}Analysis{\nhspace}Efficiency}

\begin{figure}[t]
 \begin{subfigure}[Number of calls to the MC.\label{fig_MC_DU_sp}]{
   \includegraphics[width=0.45\textwidth]{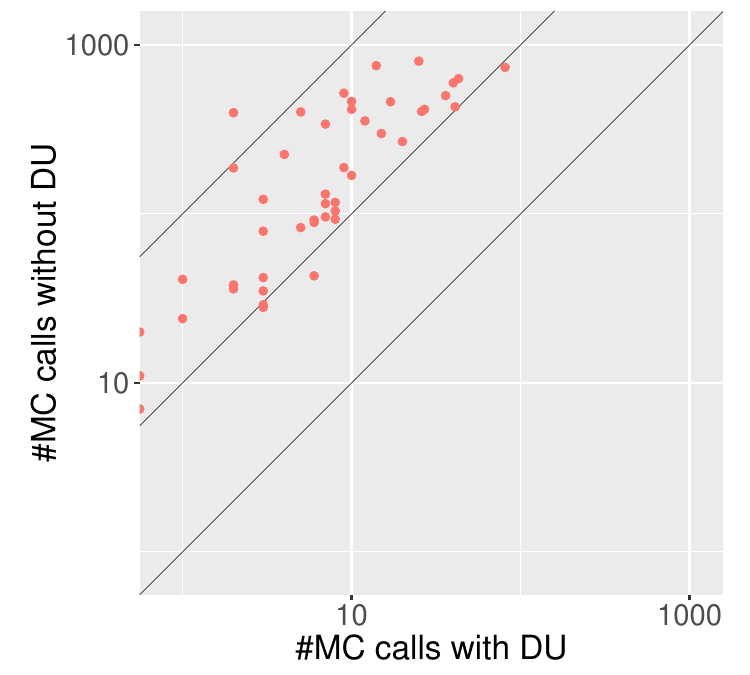}}
  \end{subfigure}
  \begin{subfigure}[Total MC time.\label{fig_time_DU_sp}]{
   \includegraphics[width=0.45\textwidth]{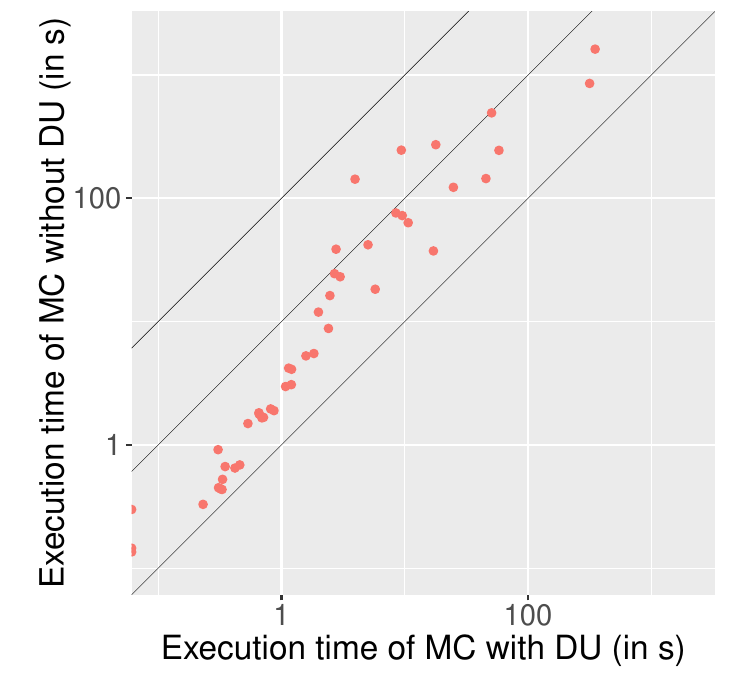}}
  \end{subfigure}
 %\nvspacesmall
  \caption{Analysis efficiency improvements due to the definitely-unknown analysis.}
  %\nvspacesmall
  \label{fig_DU}
\end{figure}

\todo{Geometric mean of the ratio (#calls without du/#calls with du) is 22.453 (zeroes are removed)}
\todo{must use logarithmic time scale! possibly scatter plot. otherwise you don't see anything for 2/3 of the figure. Logarithmic fig is available, as well as the scatter plot}

\begin{figure}[t]
 \begin{center}
 \includegraphics[width=\sizefactor\textwidth]{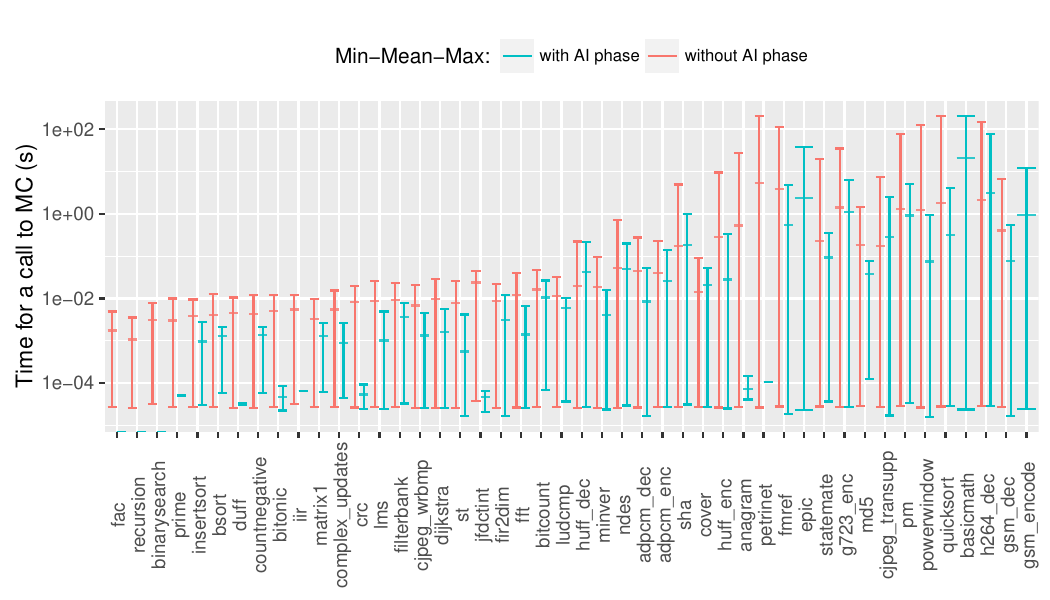}
 \end{center}
 %\nvspace
 \caption{MC execution time for individual call: min, mean, and max.}
 \label{fig_time_MCcalls_all_log}
\end{figure}

\todo{again, the relative improvement is more insightful. removing 10000 MC calls it not useful if 10000000 are left : MAY BE DONE}
\todo{$\exists$Hit or $\exists$Miss also reduce the number of MC calls is this quantified here? CAN NOT BE DONE :(}

We introduced the definitely-unknown analysis to reduce the number of MC calls: instead of calling the MC for each access not classified as either always hit or always miss by the classical static analysis, we also do not call it on definitely-unknown blocks. \todo{is this accurate? what about the $\exists$Hit, $\exists$Miss classifications?}
Figure~\ref{fig_MC_DU_sp} shows the number of MC calls with and without the definitely-unknown analysis.
The two lines parallel to the diagonal correspond to reductions in the number of calls by a factor of 10 and 100.
The definitely-unknown analysis significantly reduces the number of MC calls: for some of the larger benchmarks by around a factor of 100.
For the three smallest benchmarks, the number of calls is even reduced to zero: the definitely-unknown analysis perfectly completes the may/must analysis and no more blocks need to be classified by model checking. \todo{this is good and bad, because it also means that there is no potential for improvement...}
For 28 of the 46 benchmarks, fewer than 10 calls to the model checker are necessary after the definitely-unknown analysis.

This reduction of the number of calls to the model checker also results in significant improvements of the whole execution time of the analysis, which is dominated by the time spent in the model checker: see Figure~\ref{fig_time_DU_sp}.
%, up to a factor of 4 and the cumulated model-checker calls amount to 5 minutes instead of half-an-hour. \todo{this does *not* match the figure: there are multiple benchmarks with more than a factor of 10 improvement there!? the figures should back up what we say. or is total time of the MC something else than total analysis time?}
On average (geometric mean) the total MC execution time is reduced by a factor of 3.7 compared with an approach where only the may and must analysis results are used to reduce the number of MC calls.
\todo{we should give the geo. mean of the time ratios. the cumulated model-checking time will be dominated by the largest benchmarks... The geometric mean of the ratio (execution time mc with DU/execution time mc without du) is 3.727 (zeroes are removed) DONE} 
\todo{so almost all benchmarks could be analyzed in less than 100 seconds even without DU!? regarding Figure~\ref{fig_time_DU_sp}...}
\todo{why is reduction in calls much larger than reduction in MC time? this would be interesting to discuss!}

Note that the definitely-unknown analysis itself is very fast: it takes less than one second on all benchmarks.
\todo{this used to say 16/46, 14/16, ... when i sum these up, i don't end up at 46: 16+14+1+1=32. what's going on? MERCI it was very wrong figures :p}

\subsection{Effect of Cache and Program Model Simplifications on Model-Checking Efficiency}

In all experiments we used the focused cache model: without this focused model, the model is so large that a timeout of one hour is reached for all but the 6 smallest benchmarks.
This shows a huge scalability improvement due to the focused cache model.
It also demonstrates that building a single model to classify all the accesses at once is practically infeasible.

Figure \ref{fig_time_MCcalls_all_log} shows the execution time of individual MC calls (on a log. scale) with and without program-model simplifications based on abstract-interpretation results. For each benchmark, the figure shows the maximum, minimum, and mean execution time of all MC calls for that benchmark.
We observe that the maximum execution time is always smaller with the use of the AI phase due to the simplification of program models.
Using AI results, there are fewer MC calls and many of the suppressed MC calls are ``cheap'' calls: this explains why the average may be larger with AI phase.
Some benchmarks are missing the ``without AI phase'' result: this is the case for benchmarks for which the analysis did not terminate within one hour.
\todo{why are we not comparing apples with apples, i.e., MC calls on the same set of queries?}

\begin{figure}[t]
 \begin{subfigure}[Number of calls to the MC.\label{fig_MCCalls_PA_sp}]{
   \includegraphics[width=0.45\textwidth]{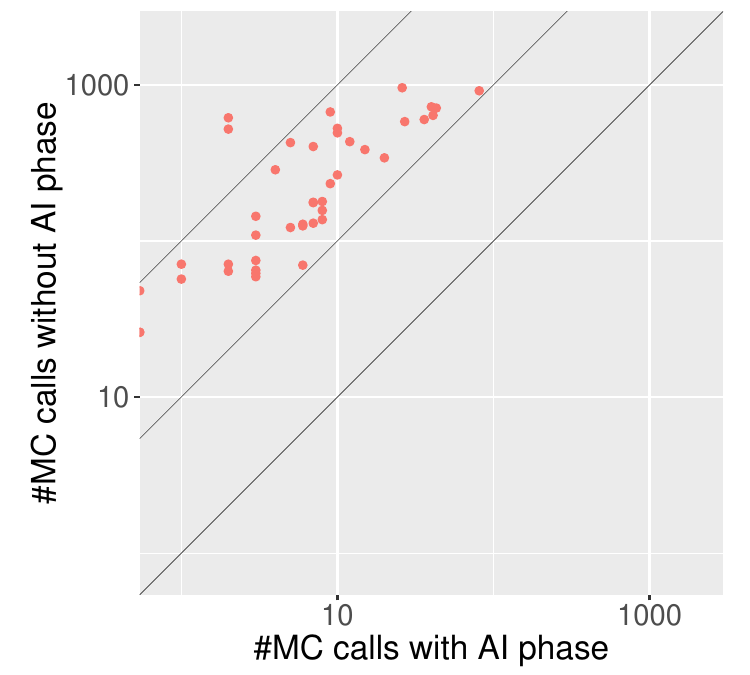}
  }
  \end{subfigure}
  \begin{subfigure}[Total MC time.\label{fig_MCtime_PA_sp}]{
    \includegraphics[width=0.45\textwidth]{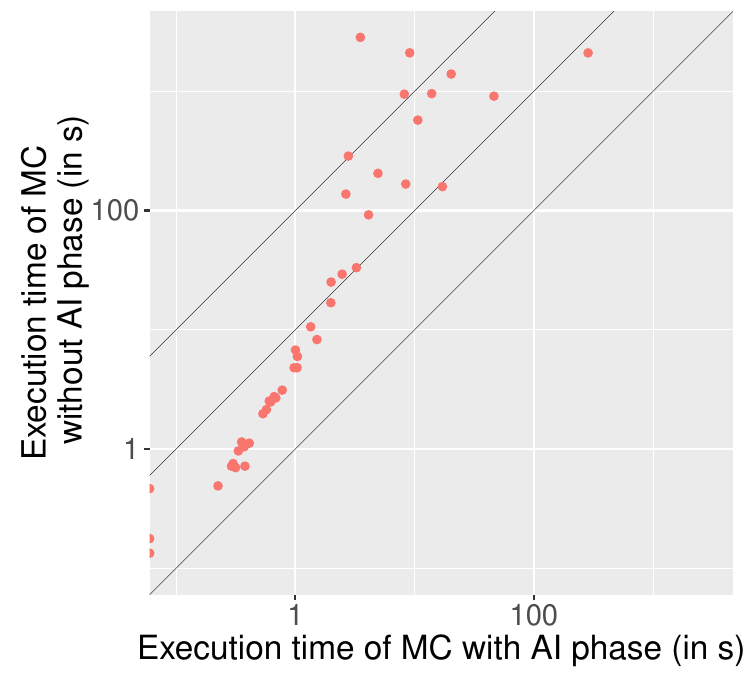}}
  \end{subfigure}
   %\nvspacesmall
  \caption{Analysis efficiency improvements due to the entire AI phase.}
  \label{fig_PA}
   %\nvspacesmall
\end{figure}

\subsection{Efficiency of the Full Analysis}
\todo{can we give the speedups in terms of total analysis time for each benchmark? MAY BE DONE}

%Our full analysis uses model checking, which may be costly.
%Here, we show how each step of our analysis improves scalability, so that the f%ull analysis still scales.

First, we compare our approach to that of the related work~\cite{Cha_rts13,Chu_rtas16}.
Both tools from the related work operate at C level, while our analysis operates at LLVM IR level.
Thus it is hard to reasonably compare analysis precision.
To compare scalability we focus on total tool execution time, as this is available.
In the experimental evaluation of \cite{Cha_rts13} we see that it takes 395 seconds to analyze \benchmarkname{statemate} (they stop the analysis at 100 MC calls). With a 
similar configuration, 64 sets, 4 ways, 4 instructions per block (resp. 8 
instructions per blocks) our analysis makes 3 calls (resp. 0) to the model checker (compared with 832 (resp. 259) MC calls without the AI phase) and spends less than 3 seconds (resp. 1.5s) on the entire analysis.
Unfortunately, among all TACLeBench benchmarks \cite{Cha_rts13} gives scalability results only for \benchmarkname{statemate}, and thus no further comparison is possible.
The analysis from~\cite{Chu_rtas16} also spends more than 350 seconds
to analyze \benchmarkname{statemate}; for \benchmarkname{ndes} it takes 38 seconds
whereas our approach makes only 3 calls to the model checker and requires less than one second for the entire analysis.
This shows that our analysis scales better than the two related approaches.
However, a careful comparison of analysis precision remains to be done.

To see more generally how well our approach scales, we compare the total analysis time with and without the AI phase.
The AI phase is composed of the may, must and definitely-unknown analyses: 
without the AI phase, the model checker is called for each memory access and the program model is not simplified.\todo{ideally we should analyze the program simplification step separately, the focused cache model separately: DO NOT SCALE WITHOUT THE FOCUSED MODEL TIMEOUT ON ALMOST ALL BENCHMARKS, and the AI separately (split into the obvious may/must and the new part) MAY BE DONE}
%We observe up to ten times fewer model checker calls due to our pre-analysis 
%step (see Figure~\ref{fig_MCcalls_PA_sp}), and each call spends from 2 to 10 times less time\todo{where do i see individual call times in the figure?}
%(see Figure~\ref{fig_indivMCtime_PA_sp}). 
\todo{i replaced these three commented lines above by the following, because the above don't seem to match the figures!!! OK}
On all benchmarks the number of MC calls is reduced by a factor of at least 10, sometimes exceeding a factor of 100 (see Figure~\ref{fig_MCCalls_PA_sp}).
This is unsurprising given the strong effect of the definitely-unknown analysis, which we observed in the previous section. Additional reductions compared with those seen in Figure~\ref{fig_MC_DU_sp} result from the classical may and must analysis.
Interestingly, the reduction in total MC time appears to increase with increasing benchmark sizes: see Figure~\ref{fig_MCtime_PA_sp}. While the improvement is moderate for small benchmarks that can be handled in a few seconds with and without the AI phase, it increases to much larger factors for the larger benchmarks.
\todo{Why is the total model checking time reduced by a smaller factor?}
\todo{Why is the reduction larger for larger benchmarks?}

It is difficult to ascertain the influence our approach would have on a full WCET analysis, with respect to both execution time and precision.
In particular, WCET analyses that precisely simulate the microarchitecture need to explore fewer pipeline states if fewer cache accesses are classified as ``unknown''. 
Thus a costlier cache analysis does not necessarily translate into a costlier analysis overall.
We consider a tight integration with a state-of-the-art WCET analyzer as interesting future work, which is beyond the scope of this paper.

\todo{no comparison of analysis time with and without model checking!?}

\todo{XXX comments
XXX Conclusion on the very good efficiency and scalability
what's a good terminology to distinguish static and dynamic accesses? 
maybe reference vs access? with and without model checking and with definitely 
unknown in between
% 			\item Effect of higher share of classified memory 
references on predicted WCET (possibly security,...?) \todo{what happens if one 
includes a persistence analysis? can we do an exact MC-based persistence 
analysis---and according definitely-unknown analyses---as well?} 
% 			\item Possibly: explore potential for further 
improvements by path-sensitive analyses
% 		\end{itemize}
% 	\item Evaluate analysis efficiency:
% 		\begin{itemize}
% 			\item Analysis time and memory consumption: (depends, 
of course, on the settings we actually implement)
% 				\begin{itemize}
% 					\item Naive MC vs optimized MC approach 
\todo{we need to check how much novelty there is here} \todo{integration with 
WCET analysis suggests to only improve classification of accesses on worst-case 
execution path (may yield different efficiency); might be different for 
security questions}
% 					\item Pure AI vs pure MC approach
% 					\item Pure AI vs classic-AI+MC approach 
(MC only for unclassified accesses)
% 					\item Pure AI vs 
definitely-unknown-AI+MC approach (MC only for (fewer) unclassified accesses)
% 					\item Pure AI vs 
definitely-unknown-AI+AI-assisted-MC approach (simplified MC for (fewer) 
unclassified accesses)
% 					\item Evaluate also resulting 
efficiency of overall WCET analysis?
% 					\item Evaluation of individual MC 
runtimes:
% 						\begin{itemize}
% 							\item Distribution of 
runtimes
% 							\item Difference 
between cases where analysis proofs property and where it does not
% 						\end{itemize}
% 				\end{itemize}
% 		\end{itemize}
% 	\item How to summarize/visualize results:
% 		\begin{itemize}
% 			\item Scatter plots
% 			\item Stepwise refinement of classifications by 
definitely-unknown and MC as in the first figure
% 			\item How to summarize results for multiple tasks? 
Geometric mean vs arithmetic mean.
% 			\item What else?
% 		\end{itemize}
% \end{itemize}
}

\section{Conclusion and Perspectives}
We have demonstrated that it is possible to precisely classify all accesses to an LRU cache at reasonable cost by a combination of abstract interpretation, which classifies most accesses, and model checking, which classifies the remaining ones.
\todo{quickly recap possibly surprising possibility of definitely-unknown analysis? future work: apply this idea to regular value analysis, i.e., intervals, octagons, etc.}
\todo{future work: perfect persistence analysis? loop peeling?}
%Some of our insights apply regardless of cache policy; future work will include extensions to other policies.

Like all other abstraction-interpretation-based cache analyses, at least those known to us, ours considers all paths within a control-flow graph to be feasible regardless of functional semantics.\todo{without the AI restriction this is contradictory considering the related work section}
Possible improvements over this include:
\begin{inparaenum}[i)]
\item encoding some of the functional semantics of the program into the model-checking problem \cite{Met_lctes16,Cha_rts13}
\item using ``trace partitioning'' \cite{Rival_Mauborgne_TOPLAS07} or ``path focusing'' \cite{Monniaux_Gonnord_SAS11} in the abstract-interpretation phase.
\end{inparaenum}

\clearpage
\bibliographystyle{splncs03}
\bibliography{cache_model_checking_analysis}

\ifthenelse{\boolean{conferenceversion}}{}{
\clearpage
\appendix
\section{Proofs for the Definitely-Unknown Abstract Interpretation}
\label{sec:abstract_interpretation_proofs}

As mention in the main part of this report, the soundness of the EH analysis comes from the consistency of the abstract transformers with respect to the concrete transformers. Here are the detailled proofs for the EH analysis.

\ConsistencyUpdateEH*
\begin{proof} Consistency of EH Analysis

	Let $(\hat{q}_0, \hat{q}_{\must0}) \in \mathcal{A}_{\existsh}$ and $b \in M$. We use the additional notations:
	\begin{itemize}
		\item $(\hat{q}_1, \hat{q}_{\must1}) = update_{\existsh}((\hat{q}_0, \hat{q}_{\must0}), b)$
		\item $\mathcal{Q}_1 = \gamma_{\existsh}(\hat{q}_1, \hat{q}_{\must1})$
		\item $\mathcal{Q}_0 = \gamma_{\existsh}(\hat{q}_0, \hat{q}_{\must0})$
		\item $\mathcal{Q}_2 = \{update^C(Q, b) \mid Q \in \mathcal{Q}_0\}$
	\end{itemize}
	We want to prove that $\mathcal{Q}_2 \subseteq \mathcal{Q}_1$.

	Let $\widetilde{Q}_2 \in \mathcal{Q}_2$ and $\widetilde{Q}_0 \in \mathcal{Q}_0$ such that $\widetilde{Q}_2 = update^C(\widetilde{Q}_0, b)$.

	Then:
	\[
		\forall \widetilde{q}_0 \in \widetilde{Q}_0, update(\widetilde{q}_0, b) \in \{update(q, b) \mid q \in \gamma_{\must}(\hat{q}_{\must0})\}
	\]

	Thus, the consistency of the must analysis gives us:
	\begin{multline}
		\forall \widetilde{q}_0 \in \widetilde{Q}_0, update(\widetilde{q}_0, b) \subseteq \gamma_{\must}(\{update_{\must}^a(\hat{q}_{\must0}, b)\}) = \gamma_{\must} (\hat{q}_{\must1})
	\end{multline}
	Then: $\widetilde{Q}_2 \subseteq \gamma_{\must} (\hat{q}_{\must1})$.

	To complete the proof that $\mathcal{Q}_2 \subseteq \mathcal{Q}_1$, it remains to prove that:
	\[
		\forall b' \in M, \exists q \in \widetilde{Q}_2 \text{, such that: }q(b') \leq \hat{q}_1(b')
	\]

	Let $b' \in M$. We have:
	\[
		\hat{q}_1(b') =
		\begin{cases}
			0		& \text{ if } b = b'\\
			\hat{q}_0(b')	& \text{ if } \hat{q}_{\must0}(b) \leq \hat{q}_0(b')\\
			\hat{q}_0(b')+1	& \text{ if } \hat{q}_{\must0}(b) > \hat{q}_0(b') \wedge \hat{q}_0(b') < k\\
			k		& \text{ if } \hat{q}_{\must0}(b) > \hat{q}_0(b') \wedge \hat{q}_0(b') = k\\
		\end{cases}
	\]

	Let $\widetilde{q}_0 \in \widetilde{Q}_0$ such that $\widetilde{q}_0(b') \leq \hat{q}_0(b')$. Let $\widetilde{q}_2 = update(\widetilde{q}_0, b) \in \widetilde{Q}_2$. We show that $\widetilde{q}_2$ is a good candidate, i.e. $\widetilde{q}_2(b') \leq \hat{q}_1(b')$.

	\[
		\widetilde{q}_2(b') =
		\begin{cases}
			0			& \text{ if } b = b'\\
			\widetilde{q}_0(b')	& \text{ if } \widetilde{q}_0(b') \geq \widetilde{q}_0(b)\\
			\widetilde{q}_0(b')+1	& \text{ if } \widetilde{q}_0(b') < \widetilde{q}_0(b) \wedge \widetilde{q}_0(b') < k\\
			k			& \text{ if } \widetilde{q}_0(b') < \widetilde{q}_0(b) \wedge \widetilde{q}_0(b') = k\\
		\end{cases}
	\]

	\begin{itemize}
		\item If $b = b'$:
			$\widetilde{q}_2(b') = 0 = \hat{q}_1(b')$.
		\item If $b \neq b' \wedge \hat{q}_{\must0}(b) \leq \hat{q}_0(b')$, then $\hat{q}_1(b') = \hat{q}_0(b')$.
			\begin{itemize}
				\item If $\widetilde{q}_0(b') \geq \hat{q}_{\must0}(b)$, then: $\widetilde{q}_0(b') \geq \widetilde{q}_0(b)$.

					Thus: $\widetilde{q}_2(b') = \widetilde{q}_0(b')$.

					Finally: $\hat{q}_1(b') = \hat{q}_0(b') \geq \widetilde{q}_0(b') = \widetilde{q}_2(b')$
				\item Otherwise, $\widetilde{q}_0(b') < \hat{q}_{\must0}(b)$ and thus: $\widetilde{q}_0(b') < \hat{q}_0(b')$.
					\begin{itemize}
						\item if $\widetilde{q}_0(b') \geq \widetilde{q}_0(b)$, then:

						  $\widetilde{q}_2(b') = \widetilde{q}_0(b') < \hat{q}_0(b') \leq \hat{q}_1(b')$
						\item if $\widetilde{q}_0(b') < \widetilde{q}_0(b)$, then:

						  $\widetilde{q}_0(b') < k$ and thus: $\widetilde{q}_2(b') = \widetilde{q}_0(b') + 1 \leq \hat{q}_0(b') \leq \hat{q}_1(b')$
					\end{itemize}
			\end{itemize}
		\item If $b \neq b' \wedge \hat{q}_{\must0}(b) > \hat{q}_0(b') \wedge \hat{q}_0(b') < k$, then $\hat{q}_1(b') = \hat{q}_0(b') + 1$.
			Moreover, $\widetilde{q}_0(b') \leq \hat{q}_0(b') < k$.

			Thus: $\widetilde{q}_2(b') \leq \widetilde{q}_0(b') +1 \leq \hat{q}_0(b') + 1 = \hat{q}_1(b')$

		\item Otherwise, $b \neq b' \wedge \hat{q}_{\must0}(b) > \hat{q}_0(b') \wedge \hat{q}_0(b') = k$. Then: $\hat{q}_1(b') = l$ and trivially: $\widetilde{q}_2(b') \leq \hat{q}_1(b')$
	\end{itemize}
	In every case, we have: $\widetilde{q}_2(b') \leq \hat{q}_1(b')$.

	Thus, $\widetilde{Q}_2 \in \mathcal{Q}_1$, proving that $\mathcal{Q}_2 \subseteq \mathcal{Q}_1$\qed
\end{proof}

\ConsistencyJoinEH*
\begin{proof}
	Let $((\hat{q}_1, \hat{q}_{M1}), (\hat{q}_2, \hat{q}_{M2})) \in \mathcal{A}_{\existsh}^2, Q_1 \in \gamma_{\existsh}(\hat{q}_1, \hat{q}_{M1}), Q_2 \in \gamma_{\existsh}(\hat{q}_2, \hat{q}_{M2})$. We use the additional notation:
	\begin{itemize}
		\item $Q_3 = Q_1 \cup Q_2$
		\item $(\hat{q}_3, \hat{q}_{M3}) = (\hat{q}_1, \hat{q}_{M1}) \sqcup_{\existsh} (\hat{q}_2, \hat{q}_{M2})$
	\end{itemize}
	We want to prove that: $Q_3 \in \gamma_{\existsh}(\hat{q}_3, \hat{q}_{M3})$.

	Let $b \in M$, $\displaystyle \min_{q \in Q_3}q(b) \leq \min_{q \in Q_1} q(b) \leq \hat{q}_1(b)$. Similarly, $\displaystyle \min_{q \in Q_3}q(b) \leq \hat{q}_2(b)$. Thus, $\displaystyle \min_{q \in Q_3}q(b) \leq \hat{q}_3(b)$.

	Then, using the consistency of the must join, we have: $Q_3 \in \gamma_{\existsh}(\hat{q}_3, \hat{q}_{M3})$\qed
\end{proof}

\SoundnessEHEM*
\begin{proof}
	Both the collecting semantics and the abstract EH semantics are defined as least solutions to sets of equations, i.e., least fixed points of functions corresponding to these equations.
	The two domains are both finite for a given program, as the number of memory blocks is finite.
	Thus, both domains have finite ascending chains, and so the least fixed points can be obtained in a finite number of Kleene iterations.

	Let $R^C_i : V \rightarrow \mathcal{P}(C)$ and $R_{\existsh,i} : V \rightarrow \mathcal{A}_{\existsh}$ denote the values reached in the $i^{th}$ Kleene iteration:
	\begin{align}
 		\forall v' \in V: R^C_{i+1}(v') & = R^C_0(v') \cup \bigcup_{(v,b, v') \in E} update^C(R_i^C(v), b),\label{localkleenecollecting}\\
		\forall v' \in V: R_{\existsh,i+1}(v') & = R_{\existsh,0}(v') \sqcup \bigsqcup_{(v,b,v') \in E} update_{\existsh}(R_{\existsh,i}(v), b).\label{localkleeneabstract}
	\end{align}
	We will prove by induction that for all $i \in \mathbb{N}$, we have $$\forall v' \in V: R^C_{i}(v') \in \gamma_{\existsh}(R_{\existsh,i}(v')).$$
	This then implies the theorem, as due to finite ascending chains, there is a $j \in \mathbb{N}$, such that the least solutions $R^C$ and $R_{\existsh}$ are $R^C_j$ and $R_{\existsh,j}$.

	Induction base ($i=0$): This follows immediately from to the assumption that $R_0^C(v) \in \gamma_{\existsh}(R_{\existsh,0}(v))$.\\
	Induction step ($i \rightarrow i+1$): Let $v' \in V$ be arbitrary.
	%We first show that $update^C(R_i^C(v), b) \in \gamma_{\existsh}(update_{\existsh}(R_{\existsh,i}(v), b))$.
	By induction hypothesis, we have $R_i^C(v) \in \gamma_{\existsh}(R_{\existsh,i}(v))$ for all $v \in V$ s.t. $(v,b,v') \in E$.
	By Lemma~\ref{lem:localconsistency} (local consistency) this implies $update^C(R_i^C(v), b) \in \gamma_{\existsh}(update_{\existsh}(R_{\existsh,i}(v), b))$ for all $b \in M$ and $v \in V$ s.t. $(v,b,v') \in E$.
	Applying Lemma~\ref{lem:joinconsistency} (join consistency) this in turn implies: \\$\bigcup_{(v,b, v') \in E} update^C(R_i^C(v), b) \in \gamma_{\existsh}(\bigsqcup_{(v,b,v') \in E)} update_{\existsh}(R_{\existsh,i}(v), b))$.\\ %for any $v'$.
	Applying Lemma~\ref{lem:joinconsistency} again, as by assumption $R_0^C(v') \in \gamma_{\existsh}(R_{\existsh,0}(v'))$, yields:
	\begin{multline}
		R^C_0(v') \cup \bigcup_{(v,b, v') \in E} update^C(R_i^C(v), b)\\\in \gamma_{\existsh}(R_{\existsh,0}(v') \sqcup \bigsqcup_{(v,b,v') \in E} update_{\existsh}(R_{\existsh,i}(v), b)),
	\end{multline}
	which, by (\ref{localkleenecollecting}) and (\ref{localkleeneabstract}), is equivalent to $R_{i+1}^C(v') \in \gamma_{\existsh}(R_{\existsh,i+1}(v'))$.
	\qed
\end{proof}

In the case of the EM analysis, we are computing a safe lower bound on the maximal age of a block, thus the concretization is:
{\small\begin{align}
	\gamma_{\existsm}:	\mathcal{A}_{\existsm}		& \rightarrow	\mathcal{P}(\mathcal{P}(C))\nonumber\\
				(\hat{q}, \hat{q}_{\may})	& \mapsto 	\big\{Q \subseteq \gamma_{\may}(\hat{q}_{\may}) \mid \forall b \in M: \max_{q \in Q} q(b) \geq \hat{q}(b)\big\}
\end{align}}

Similarly to $update_{\existsh}$, the abstract transformer $update_{\existsm}((\hat{q}_{\existsm}, \hat{q}_{\may}), b)$ corresponding to an access to block $b$ is defined by the pair $(\hat{q}_{\existsm}',update_{\may}(\hat{q}_{\may}, b))$, where
{\small\begin{eqnarray}
\hat{q}_{\existsm}' = \lambda b'.
		\begin{cases}
			0		& \text{ if } b' = b\\
			\hat{q}(b')	& \text{ if } \hat{q}_{\may}(b) < \hat{q}(b')\\
			\hat{q}(b')+1	& \text{ if } \hat{q}_{\may}(b) \geq \hat{q}(b') \wedge \hat{q}(b') < k\\
			k		& \text{ if } \hat{q}_{\may}(b) \geq \hat{q}(b') \wedge \hat{q}(b') = k\\
		\end{cases}
\end{eqnarray}}

When joining two analysis states during the EM analysis, it is safe to take the maximum of the lower bounds on maximal ages. Indeed, as mentioned for the EH analysis, the union of reachable states over all incoming paths is reachable.
\begin{equation}
(\hat{q}_1, \hat{q}_{\may1}) \sqcup_{\existsm} (\hat{q}_2, \hat{q}_{\may2}) =
(\lambda b. \max(\hat{q}_1(b), \hat{q}_2(b)), \hat{q}_{\may1} \sqcup_{\may} \hat{q}_{\may2})
\end{equation}

The proof of EM analysis soundness is similar to the proof of EH analysis soundness and can be obtained by substituting the name of the transformers and the join. Thus, in the following, we only prove the consistency of the transformers and the join.

\begin{restatable}[Local Consistency]{lemma}{ConsistencyUpdateEM}
The abstract transformer $update_{\existsm}$ soundly approximates its concrete counterpart $update^C$:
\begin{multline}
		\forall (\hat{q}, \hat{q}_{\may}) \in \mathcal{A}_{\existsm}, \forall b \in M, \forall Q \in \gamma_{\existsm}(\hat{q}, \hat{q}_{\may}):\\
 update^C (Q, b) \in \gamma_{\existsm}(update_{\existsm}((\hat{q}, \hat{q}_{\may}), b)).
\end{multline}%
%\noindent and \emph{mutatis mutandis} with EM.
\end{restatable}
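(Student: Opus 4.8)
The plan is to dualize the proof of Lemma~\ref{lem:localconsistency} (local consistency of $update_{\existsh}$), systematically exchanging minima for maxima, upper bounds for lower bounds, and the must component for the may component. I would fix $(\hat{q}_0, \hat{q}_{\may0}) \in \mathcal{A}_{\existsm}$, a block $b \in M$, and an arbitrary $Q \in \gamma_{\existsm}(\hat{q}_0, \hat{q}_{\may0})$, and write $(\hat{q}_1, \hat{q}_{\may1}) = update_{\existsm}((\hat{q}_0, \hat{q}_{\may0}), b)$ together with $Q' = update^C(Q, b)$. The goal $Q' \in \gamma_{\existsm}(\hat{q}_1, \hat{q}_{\may1})$ unfolds, by the definition of $\gamma_{\existsm}$, into two obligations: (i)~$Q' \subseteq \gamma_{\may}(\hat{q}_{\may1})$, and (ii)~for every $b' \in M$, $\max_{q' \in Q'} q'(b') \geq \hat{q}_1(b')$.

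Obligation (i) would be inherited from the soundness of the plain may analysis exactly as the EH proof reused must-consistency: since $Q \subseteq \gamma_{\may}(\hat{q}_{\may0})$ and $\hat{q}_{\may1} = update_{\may}(\hat{q}_{\may0}, b)$, local consistency of the may transformer gives $update^C(Q, b) \subseteq \gamma_{\may}(update_{\may}(\hat{q}_{\may0}, b))$. For obligation (ii) I would run the same four-case split on $b'$ that defines $\hat{q}_1 = \hat{q}_{\existsm}'$, each time exhibiting a concrete witness in $Q'$. The case $b' = b$ is immediate since $\hat{q}_1(b) = 0$. When $\hat{q}_{\may0}(b) < \hat{q}_0(b')$, so $\hat{q}_1(b') = \hat{q}_0(b')$, I would pick $q \in Q$ with $q(b') \geq \hat{q}_0(b')$; since the concrete update never lowers the age of a block other than the one accessed, $update(q, b)(b') \geq q(b') \geq \hat{q}_0(b') = \hat{q}_1(b')$. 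The case $\hat{q}_0(b') = k$ is equally easy, as a witness then has $q(b') = k$, which stays capped at $k$.

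The hard part will be the remaining case, $\hat{q}_{\may0}(b) \geq \hat{q}_0(b')$ with $\hat{q}_0(b') < k$, where the bound must \emph{strictly} increase to $\hat{q}_1(b') = \hat{q}_0(b') + 1$. I would take a witness $q \in Q$ maximizing $q(b')$. If $q(b') \geq \hat{q}_0(b') + 1$, the update preserves this and we are done. The delicate subcase is when the lower bound is tight, i.e. $q(b') = \hat{q}_0(b')$: here I would use $q(b') = \hat{q}_0(b') \leq \hat{q}_{\may0}(b) \leq q(b)$ to see that $b$ is no younger than $b'$, and then invoke the LRU invariant that two distinct blocks which are both cached (here $q(b') < k$) occupy distinct stack positions and hence have \emph{distinct} ages, upgrading $q(b) \geq q(b')$ to $q(b) > q(b')$. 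The concrete update then ages $b'$ by exactly one, yielding $update(q, b)(b') = q(b') + 1 = \hat{q}_1(b')$.

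This strict-increase step is precisely what has no counterpart in the EH proof: there, the fact that the concrete update raises an age by \emph{at most} one made the upper-bound direction essentially free, whereas establishing an \emph{at least} one increase for the lower-bound direction forces me to combine the may bound on $b$ with the distinct-ages property of genuine LRU configurations. I expect this to be the only genuinely subtle point; once it is settled, assembling the four cases gives $\max_{q' \in Q'} q'(b') \geq \hat{q}_1(b')$ for every $b'$, which together with~(i) establishes $Q' \in \gamma_{\existsm}(\hat{q}_1, \hat{q}_{\may1})$, and the soundness of the full EM semantics then follows by the same Kleene-iteration induction used for Theorem~\ref{lem:joinconsistency}'s consequences in the EH case.
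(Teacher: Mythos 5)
Your proposal is correct and follows essentially the same route as the paper's proof: obligation (i) is discharged by may-consistency exactly as in the appendix, and obligation (ii) is handled by the same witness-based four-case analysis, including the key step in the hard case where the may lower bound on $b$'s age is combined with the fact that distinct cached blocks have distinct ages to turn $q(b) \geq q(b')$ into a strict inequality and force the age of $b'$ to increase by one. The paper organizes the subcases of the hard case slightly differently (splitting on $\widetilde{q}_0(b') \leq \hat{q}_{\may0}(b)$ rather than on tightness of the witness), but the underlying argument is identical.
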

\begin{proof} Consistency of EM Analysis

	Let $(\hat{q}_0, \hat{q}_{\may0}) \in \mathcal{A}_{\existsm}$ and $b \in M$. We use the additional notations:
	\begin{itemize}
		\item $(\hat{q}_1, \hat{q}_{\may1}) = update_{\existsm}((\hat{q}_0, \hat{q}_{\may0}), b)$
		\item $\mathcal{Q}_1 = \gamma_{\existsm}(\hat{q}_1, \hat{q}_{\may1})$
		\item $\mathcal{Q}_0 = \gamma_{\existsm}(\hat{q}_0, \hat{q}_{\may0})$
		\item $\mathcal{Q}_2 = \{update^C(Q, b) \mid Q \in \mathcal{Q}_0\}$
	\end{itemize}
	We want to prove that $\mathcal{Q}_2 \subseteq \mathcal{Q}_1$.

	Let $\widetilde{Q}_2 \in \mathcal{Q}_2$ and $\widetilde{Q}_0 \in \mathcal{Q}_0$ such that $\widetilde{Q}_2 = update^C(\widetilde{Q}_0, b)$.

	Then:
	\[
		\forall \widetilde{q}_0 \in \widetilde{Q}_0, update(\widetilde{q}_0, b) \in \{update(q, b) \mid q \in \gamma_{\may}(\hat{q}_{\may0})\}
	\]

	Thus, the consistency of the may analysis gives us:
	\begin{multline}
		\forall \widetilde{q}_0 \in \widetilde{Q}_0, update(\widetilde{q}_0, b) \subseteq \gamma_{\may}(\{update_{\may}^a(\hat{q}_{\may0}, b)\}) = \gamma_{\may} (\hat{q}_{\may1})
	\end{multline}
	Then: $\widetilde{Q}_2 \subseteq \gamma_{\may} (\hat{q}_{\may1})$.

	To complete the proof that $\mathcal{Q}_2 \subseteq \mathcal{Q}_1$, it remains to prove that:
	\[
		\forall b' \in M, \exists q \in \widetilde{Q}_2 \text{, such that: }q(b') \leq \hat{q}_1(b')
	\]

	Let $b' \in M$. We have:
	\[
		\hat{q}_1(b') =
		\begin{cases}
			0		& \text{ if } b = b'\\
			\hat{q}_0(b')	& \text{ if } \hat{q}_{\may0}(b) < \hat{q}_0(b')\\
			\hat{q}_0(b')+1	& \text{ if } \hat{q}_{\may0}(b) \geq \hat{q}_0(b') \wedge \hat{q}_0(b') < k\\
			k		& \text{ if } \hat{q}_{\may0}(b) \geq \hat{q}_0(b') \wedge \hat{q}_0(b') = k\\
		\end{cases}
	\]

	Let $\widetilde{q}_0 \in \widetilde{Q}_0$ such that $\widetilde{q}_0(b') \geq \hat{q}_0(b')$. Let $\widetilde{q}_2 = update(\widetilde{q}_0, b) \in \widetilde{Q}_2$. We show that $\widetilde{q}_2$ is a good candidate, i.e. $\widetilde{q}_2(b') \geq \hat{q}_1(b')$.

	\[
		\widetilde{q}_2(b') =
		\begin{cases}
			0			& \text{ if } b = b'\\
			\widetilde{q}_0(b')	& \text{ if } \widetilde{q}_0(b') \geq \widetilde{q}_0(b)\\
			\widetilde{q}_0(b')+1	& \text{ if } \widetilde{q}_0(b') < \widetilde{q}_0(b) \wedge \widetilde{q}_0(b') < k\\
			k			& \text{ if } \widetilde{q}_0(b') < \widetilde{q}_0(b) \wedge \widetilde{q}_0(b') = k\\
		\end{cases}
	\]

	\begin{itemize}
		\item If $b = b'$:
			$\widetilde{q}_2(b') = 0 = \hat{q}_1(b')$.
		\item If $b \neq b' \wedge \hat{q}_{\may0}(b) < \hat{q}_0(b')$, then $\hat{q}_1(b') = \hat{q}_0(b')$.

			Moreover, $b \neq b' \Rightarrow \widetilde{q}_2(b') \geq \widetilde{q}_0(b') \geq \hat{q}_0(b') = \hat{q}_1(b')$.
		\item If $b \neq b' \wedge \hat{q}_{\may0}(b) \geq \hat{q}_0(b') \wedge \hat{q}_0(b') < k$, then $\hat{q}_1(b') = \hat{q}_0(b') + 1$.
			\begin{itemize}
				\item If $\widetilde{q}_0(b') \leq \hat{q}_{\may0}(b)$, then: $\widetilde{q}_0(b') \leq \widetilde{q}_0(b)$.

					Moreover, $b \neq b' \Rightarrow \widetilde{q}_0(b') < \widetilde{q}_0(b)$, and thus: $\widetilde{q}_2(b') \geq \hat{q}_0(b') + 1 = \hat{q}_1(b')$.
				\item Otherwise, $\widetilde{q}_0(b') > \hat{q}_{\may0}(b)$ and thus: $\widetilde{q}_0(b') > \hat{q}_0(b')$.
					\begin{itemize}
						\item if $\widetilde{q}_0(b') \geq \widetilde{q}_0(b)$, then:

						  $\widetilde{q}_2(b') = \widetilde{q}_0(b') \geq \hat{q}_0(b') +1 = \hat{q}_1(b')$
						\item if $\widetilde{q}_0(b') < \widetilde{q}_0(b) \wedge \widetilde{q}_0(b') < k$, then:

						  $\widetilde{q}_2(b') = \widetilde{q}_0(b') + 1 > \hat{q}_0(b') + 1 = \hat{q}_1(b')$
						\item if $\widetilde{q}_0(b') < \widetilde{q}_0(b) \wedge \widetilde{q}_0(b') = k$, then:

						  $\widetilde{q}_2(b') = k \geq \hat{q}_1(b')$
					\end{itemize}
			\end{itemize}
		\item If $b \neq b' \wedge \hat{q}_{\may0}(b) \geq \hat{q}_0(b') \wedge \hat{q}_0(b') = k$, then $\hat{q}_1(b') = k$.

			Thus: $b \neq b' \Rightarrow \widetilde{q}_2(b') \geq \widetilde{q}_0(b') \geq \hat{q}_0(b') = k \geq \hat{q}_1(b')$
	\end{itemize}
	In every case, we have: $\widetilde{q}_2(b') \geq \hat{q}_1(b')$.

	Thus, $\widetilde{Q}_2 \in \mathcal{Q}_1$, proving that $\mathcal{Q}_2 \subseteq \mathcal{Q}_1$\qed
\end{proof}

\begin{restatable}[Join Consistency]{lemma}{ConsistencyJoinEM}
	The join operator $\sqcup_{\existsm}$ is correct:
	\begin{multline}
			\forall ((\hat{q}_1, \hat{q}_{M1}), (\hat{q}_2, \hat{q}_{M2})) \in \mathcal{A}_{\existsm}^2, Q_1 \in \gamma_{\existsm}(\hat{q}_1, \hat{q}_{M1}), Q_2 \in \gamma_{\existsm}(\hat{q}_2, \hat{q}_{M2}):\\
			Q_1 \cup Q_2 \in \gamma_{\existsm}((\hat{q}_1, \hat{q}_{M1}) \sqcup_{\existsm} (\hat{q}_2, \hat{q}_{M2})).
	\end{multline}
%\noindent and \emph{mutatis mutandis} with EM.
\end{restatable}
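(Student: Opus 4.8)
The plan is to mirror the proof of the EH join consistency lemma (Lemma~\ref{lem:joinconsistency}), exploiting the duality between minimal and maximal ages. First I would introduce the abbreviations $Q_3 = Q_1 \cup Q_2$ and $(\hat{q}_3, \hat{q}_{M3}) = (\hat{q}_1, \hat{q}_{M1}) \sqcup_{\existsm} (\hat{q}_2, \hat{q}_{M2})$, so that by definition of $\sqcup_{\existsm}$ we have $\hat{q}_3 = \lambda b. \max(\hat{q}_1(b), \hat{q}_2(b))$ and $\hat{q}_{M3} = \hat{q}_{M1} \sqcup_{\may} \hat{q}_{M2}$. The goal is then to establish $Q_3 \in \gamma_{\existsm}(\hat{q}_3, \hat{q}_{M3})$, which by the definition of $\gamma_{\existsm}$ amounts to the two conditions $Q_3 \subseteq \gamma_{\may}(\hat{q}_{M3})$ and $\forall b \in M: \max_{q \in Q_3} q(b) \geq \hat{q}_3(b)$.

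The containment $Q_3 \subseteq \gamma_{\may}(\hat{q}_{M3})$ follows immediately from the correctness of the may join: since $Q_1 \subseteq \gamma_{\may}(\hat{q}_{M1})$ and $Q_2 \subseteq \gamma_{\may}(\hat{q}_{M2})$, their union is contained in $\gamma_{\may}(\hat{q}_{M1} \sqcup_{\may} \hat{q}_{M2})$, exactly as in the EH case. For the maximal-age condition, I would fix an arbitrary $b \in M$ and use that $Q_1 \subseteq Q_3$ to obtain $\max_{q \in Q_3} q(b) \geq \max_{q \in Q_1} q(b) \geq \hat{q}_1(b)$, where the second inequality is the defining property of $Q_1 \in \gamma_{\existsm}(\hat{q}_1, \hat{q}_{M1})$. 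Symmetrically $\max_{q \in Q_3} q(b) \geq \hat{q}_2(b)$, and combining these two lower bounds yields $\max_{q \in Q_3} q(b) \geq \max(\hat{q}_1(b), \hat{q}_2(b)) = \hat{q}_3(b)$, as required.

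I do not expect any genuine obstacle here: the argument is entirely dual to the minimal-age reasoning of the EH lemma, with $\min$ replaced by $\max$ and the inequalities reversed. The only point that deserves a moment of care is the monotonicity direction: the witness attaining the maximum may lie in either $Q_1$ or $Q_2$, but since both are subsets of $Q_3$ the lower bound transfers to the union without any loss, precisely because enlarging a finite set can only increase its pointwise maximum. This is the mirror image of the fact that enlarging a set can only decrease its pointwise minimum, which drove the EH proof, and no further machinery is needed.
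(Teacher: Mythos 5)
Your proof is correct and follows essentially the same route as the paper's: the same notation $Q_3$ and $(\hat{q}_3, \hat{q}_{M3})$, the observation that $\max_{q \in Q_3} q(b) \geq \max_{q \in Q_i} q(b) \geq \hat{q}_i(b)$ for $i = 1, 2$ hence $\max_{q \in Q_3} q(b) \geq \hat{q}_3(b)$, and an appeal to the consistency of the may join for the second component. If anything, you spell out the $Q_3 \subseteq \gamma_{\may}(\hat{q}_{M3})$ step slightly more explicitly than the paper does.
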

\begin{proof}
	Let $((\hat{q}_1, \hat{q}_{M1}), (\hat{q}_2, \hat{q}_{M2})) \in \mathcal{A}_{\existsm}^2, Q_1 \in \gamma_{\existsm}(\hat{q}_1, \hat{q}_{M1}), Q_2 \in \gamma_{\existsm}(\hat{q}_2, \hat{q}_{M2})$. We use the additional notation:
	\begin{itemize}
		\item $Q_3 = Q_1 \cup Q_2$
		\item $(\hat{q}_3, \hat{q}_{M3}) = (\hat{q}_1, \hat{q}_{M1}) \sqcup_{\existsm} (\hat{q}_2, \hat{q}_{M2})$
	\end{itemize}
	We want to prove that: $Q_3 \in \gamma_{\existsm}(\hat{q}_3, \hat{q}_{M3})$.

	Let $b \in M$, $\displaystyle \max_{q \in Q_3}q(b) \geq \max_{q \in Q_1} q(b) \geq \hat{q}_1(b)$. Similarly, $\displaystyle \max_{q \in Q_3}q(b) \geq \hat{q}_2(b)$. Thus, $\displaystyle \max_{q \in Q_3}q(b) \geq \hat{q}_3(b)$.

	Then, using the consistency of the may join, we have: $Q_3 \in \gamma_{\existsm}(\hat{q}_3, \hat{q}_{M3})$\qed
\end{proof}

\section{Proofs for Reduced Models in Model Checking}
\label{sec:mc_abstract_model_proofs}

\MCUpdateConsistency*
\begin{proof}
	Let $q=[b_1,...,b_k] \in C$ a reachable cache state and $b \in M$ a memory block.

	We prove consistency by inspection of different possible cases. First half of the proof deals with cache states that do not contain the interesting block $a$ ($\forall i \in \llbracket 1, k\rrbracket, b_i \neq a$). Second half deals with cache states that contain it ($\exists i \in \llbracket 1, k \rrbracket, b_i = a$). In both part, we treat the cases where the block accessed $b$ is $a$ or not. Moreover, to treat $a$ eviction, the second half of the proof adds sub-cases for distinction of states containing $a$ ``at the end'' of the cache (near eviction).
\hspace{-2cm}
\begin{itemize}
	\item if $\forall i \in \llbracket 1,k \rrbracket, b_i \neq a$ (i.e. $a$ is not in $q$), then we can distinguish two cases:
	\begin{itemize}
		\item if $b = a$ then:
		\begin{align*}
			\focusalpha(update(q, b))	&= \focusalpha(update([\overset{\neq a}{b_1},\overset{\neq a}{b_2},...,\overset{\neq a}{b_k}], a)) \\
							&= \focusalpha([a, b_1, ..., b_{k-1}]) \\
							&= \{\}\\
			\focusupdate(\focusalpha(q), b)	&= \focusupdate(\focusalpha([\overset{\neq a}{b_1},\overset{\neq a}{b_2},...,\overset{\neq a}{b_k}]), a) \\
							&= \focusupdate(\varepsilon, a) \\
							&= \{\}
		\end{align*}
		So consistency holds.
		\item if $b \neq a$ then:
		{\small\begin{align*}
			\focusalpha(update(q, b))	&= \focusalpha(update([\overset{\neq a}{b_1}, \overset{\neq a}{b_2},...,\overset{\neq a}{b_k}], \overset{\neq a}{b})) \\
							&= \focusalpha([\overset{\neq a}{b'_1},\overset{\neq a}{b'_2},...,\overset{\neq a}{b'_k}]) \text{ where } \forall i \in \llbracket 1, k \rrbracket, b'_i \in \{b_1,...,b_k,b\} \\
							&= \varepsilon \\
			\focusupdate(\focusalpha(q), b)	&= \focusupdate(\focusalpha([\overset{\neq a}{b_1},\overset{\neq a}{b_2},...,\overset{\neq a}{b_k}]), b) \\
							&= \focusupdate(\varepsilon, \overset{\neq a}{b}) \\
							&= \varepsilon
		\end{align*}}
		So property consistency holds.
	\end{itemize}
	\item if $\exists i \in \llbracket 1,k \rrbracket$ such that $b_i = a$ ($a$ is in the cache), we also distinguish between the cases $b = a$ and $b \neq a$:
	\begin{itemize}
		\item if $b = a$ then:

		{\small\begin{align*}
			\focusalpha(update(q, b))	&= \focusalpha(update([\overset{\neq a}{b_1},...,\overset{\neq a}{b_{i-1}}, a, b_{i+1},...,b_k], a)) \\
							&= \focusalpha([a, b_1,...,b_{i-1}, b_{i+1},...,b_k]) \\
							&= \{\}\\
			\focusupdate(\focusalpha(q), b)	&= \focusupdate(\focusalpha([\overset{\neq a}{b_1},...,\overset{\neq a}{b_{i-1}}, a, b_{i+1},...,b_k]), a)\\
							&= \focusupdate(\{b_1,...,b_{i-1}\}, a) \\
							&= \{\}
		\end{align*}}
		So consistency holds.

		\item if $b \neq a$, there is different cases depending whether $b$ is in the cache before or after $a$ and depending if $a$ is the least recently used block:
		\begin{itemize}
			\item
			if there exists $j < i$ such that $b_j = b$ ($b$ is in the cache and is younger than $a$):
			\begin{align*}
				&\focusalpha(update(q, b)) \\
				&= \focusalpha(update([b_1,...,b_{j-1},b,b_{j+1},...,b_{i-1}, a,b_{i+1},...,b_k], b))\\
				&= \focusalpha([b,b_1,...,b_{j-1},b_{j+1},...,b_{i-1}, a,b_{i+1},...,b_k])\\
				&= \{b,b_1,...,b_{j-1},b_{j+1},...,b_{i-1}\}\\
				&= \{b_1,...,b_{i-1}\}
			\end{align*}
			\begin{align*}
				&\focusupdate(\focusalpha(q), b) \\
				&= \focusupdate(\focusalpha([b_1,...,b_{j-1},b,b_{j+1},...,b_{i-1},a, b_{i+1},...,b_k]), b)\\
				&= \focusupdate(\{b_1,...,b_{j-1},b,b_{j+1},...,b_{i-1}\}), b)\\
				&= \focusupdate(\{b_1,...,b_{i-1}\}, b)\\
				&= \{b_1,...,b_{i-1}\}
			\end{align*}
			So consistency holds.

			\item
			if there exists $j > i$ such that $b_j = b$ ($b$ is in the cache and is older than $a$):
			\begin{align*}
				&\focusalpha(update(q, b))\\
				&= \focusalpha(update([b_1,...,b_{i-1},a,b_{i+1},...,b_{j-1}, b,b_{j+1},...,b_k],b)) \\
				&= \focusalpha([b,b_1,...,b_{i-1},a,b_{i+1},...,b_{j-1}, b_{j+1},...,b_k], b) \\
				&= \{b,b_1,...,b_{i-1}\}
			\end{align*}
			\begin{align*}
				&\focusupdate(\focusalpha(q), b) \\
				&= \focusupdate(\focusalpha([b_1,...,b_{i-1},a,b_{i+1},...,b_{j-1}, b,b_{j+1},...,b_k]), b) \\
				&= \focusupdate(\{b_1,...,b_{i-1}\}, b) \\
				&= \{b,b_1,...b_{i-1}\}
			\end{align*}
			So consistency holds.

			\item
			if $\forall j, b_j \neq b$ and $i \neq k$ (i.e. $b$ is not in the cache and $a$ is not the least recently used block):

			\begin{align*}
				\focusalpha(update(q, b))	&= \focusalpha(update([b_1,...,b_{i-1}, a,b_{i+1},...,b_{k}], b)) \\
								&= \focusalpha([b,b_1,...,b_{i-1}, a,b_{i+1},...,b_{k-1}]) \\
								&= \{b,b_1,...,b_{i-1}\} \\
				\focusupdate(\focusalpha(q), b)	&= \focusupdate(\focusalpha([b_1,...,b_{i-1}, a,b_{i+1},...,b_{k}]), b) \\
								&= \focusupdate(\{b_1,...,b_{i-1},a\}, b) \\
								&= \{b,b_1,...,b_{i-1}\}
			\end{align*}
			So consistency holds.

			\item
			if $\forall j, b_j \neq b$ and $i = k$ (i.e. $b$ is not in the cache and $a$ is the least recently used block):

			{\small\begin{align*}
				\focusalpha(update(q, b))	&= \focusalpha(update([b_1,...,b_{k-1},a], b)) \\
								&= \focusalpha([b,b_1,...,b_{k-1}]) \\
								&= \varepsilon \\
				\focusupdate(\focusalpha(q), b)	&= \focusupdate(\focusalpha([b_1,...,b_{k-1},a]), b) \\
								&= \focusupdate(\{b_1,...,b_{k-1}\}, b) \\
								&= \varepsilon
			\end{align*}}
			So consistency holds.
		\end{itemize}
	\end{itemize}
\end{itemize}
In all cases, consistency holds.
\end{proof}

\begin{comment}
\MCTransformersConsistency*
\begin{proof}
	Let $Q \in \mathcal{P}(C)$ and $b \in M$
	\begin{align*}
		\alpha_a(update^c(Q, b)) &= \alpha_a(\{update(q, b), q \in Q\})\\
					 &= \{\widetilde{\alpha_a}(update(q,b)), q \in Q\}\\
					 &= \{\widetilde{update}^a(\widetilde{\alpha_a}(q), b), q \in Q\}\\
					 &= update^a(\{\widetilde{\alpha_a}(q), q \in Q\}, b)\\
					 &= update^a(\alpha_a(Q))
	\end{align*}
\end{proof}
\end{comment}

%\input{david_simple_explanation}

\section{Additional Results from Experimental Evaluation}

\label{app_eval}

\begin{figure}[t]
 \begin{center}
 \includegraphics[width=\sizefactor\textwidth]{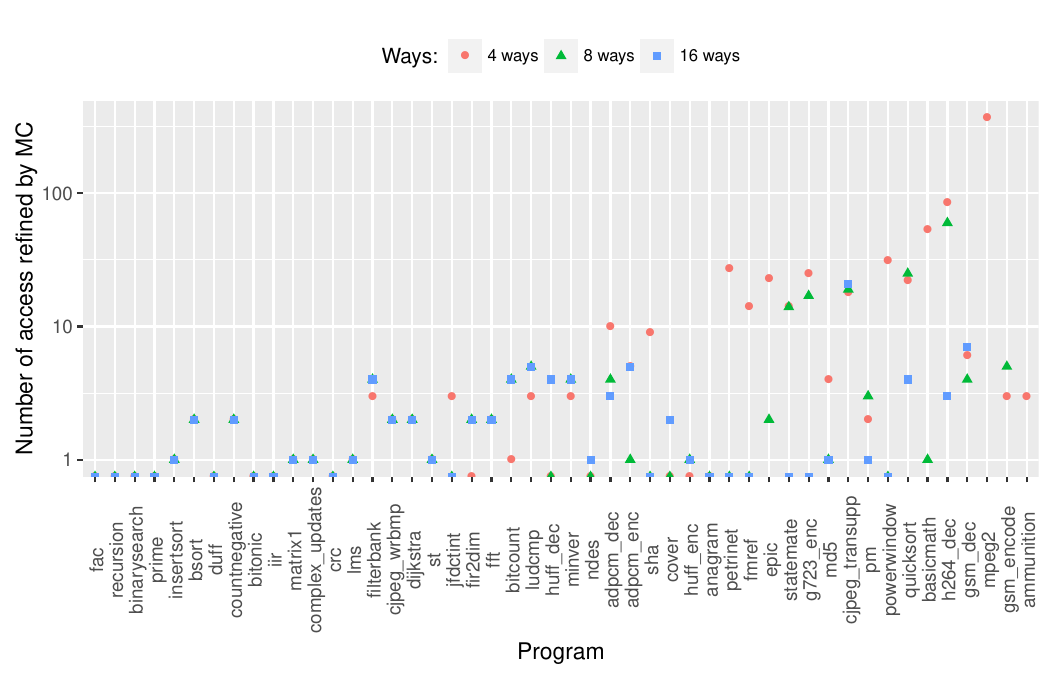}
 \end{center}
 \caption{Number of accesses whose classification was refined to hit or miss by model checking depending on the number of ways with 16 sets and 4 instructions per blocks.}
 \label{fig_ways}
\end{figure}

\begin{figure}[t] 
 \begin{center}
 \includegraphics[width=\sizefactor\textwidth]{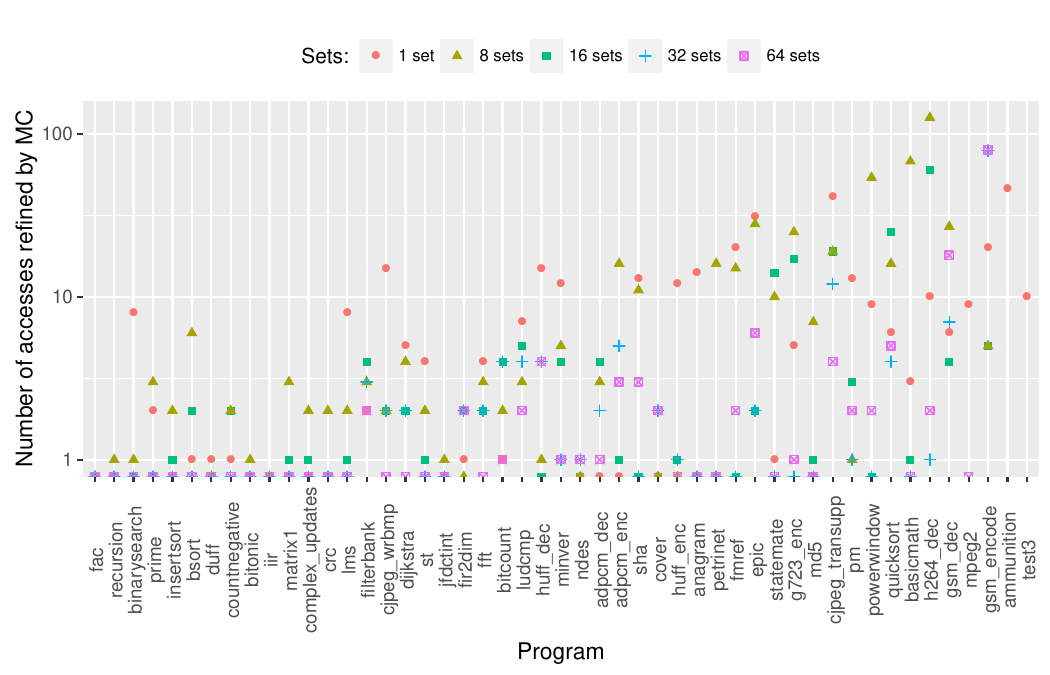}
 \end{center}
 \caption{Number of accesses whose classification was refined to hit or miss by model checking depending on the number of sets with 8 ways and 4 instructions per blocks.}
 \label{fig_sets}
\end{figure}

\begin{figure}[t]
 \begin{center}
 \includegraphics[width=\sizefactor\textwidth]{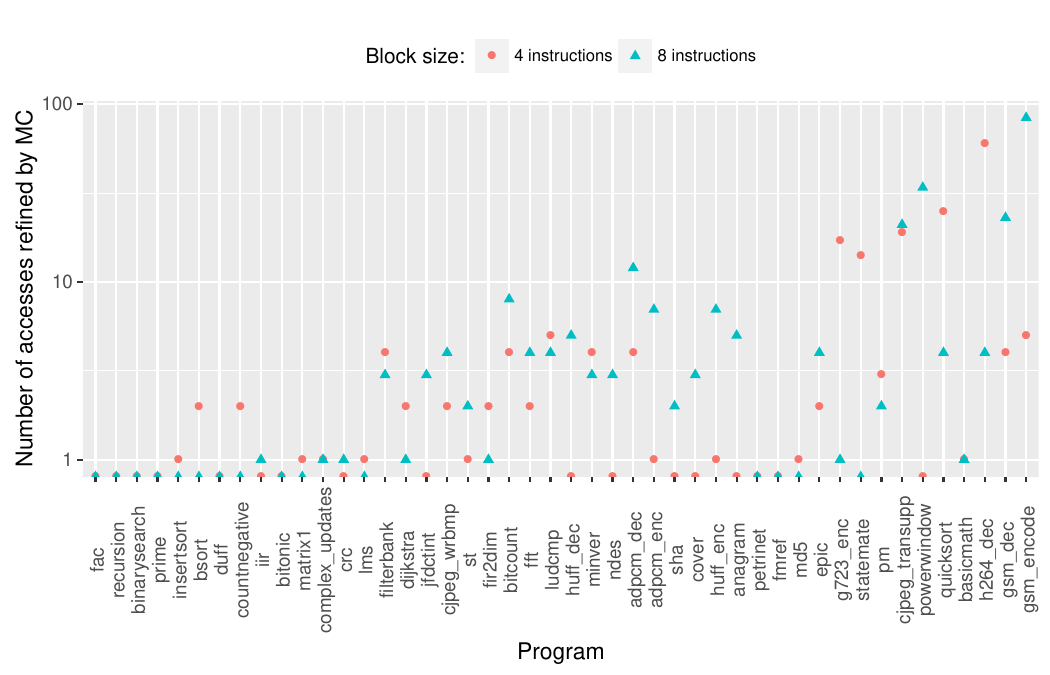}
 \end{center}
 \caption{Number of accesses whose classification was refined to hit or miss by model checking depending on the number of instructions per blocks with 8 ways and 16 sets.}
 \label{fig_blocks}
\end{figure}

\begin{figure}[ht]
 \begin{center}
 \includegraphics[width=\sizefactor\textwidth]{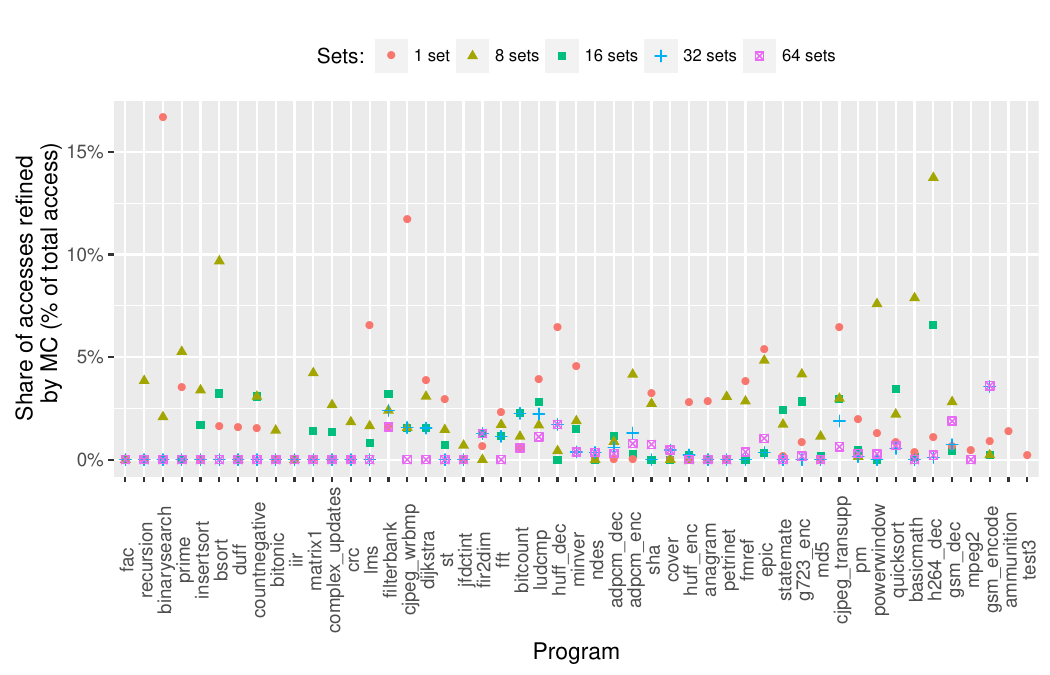}
 \end{center}
 \caption{Percentage of accesses whose classification was refined to hit or miss by model checking depending on the number of sets with 8 ways and 4 instructions per blocks.}
 \label{fig_rate_sets}
\end{figure}

\begin{figure}[ht]
 \begin{center}
 \includegraphics[width=\sizefactor\textwidth]{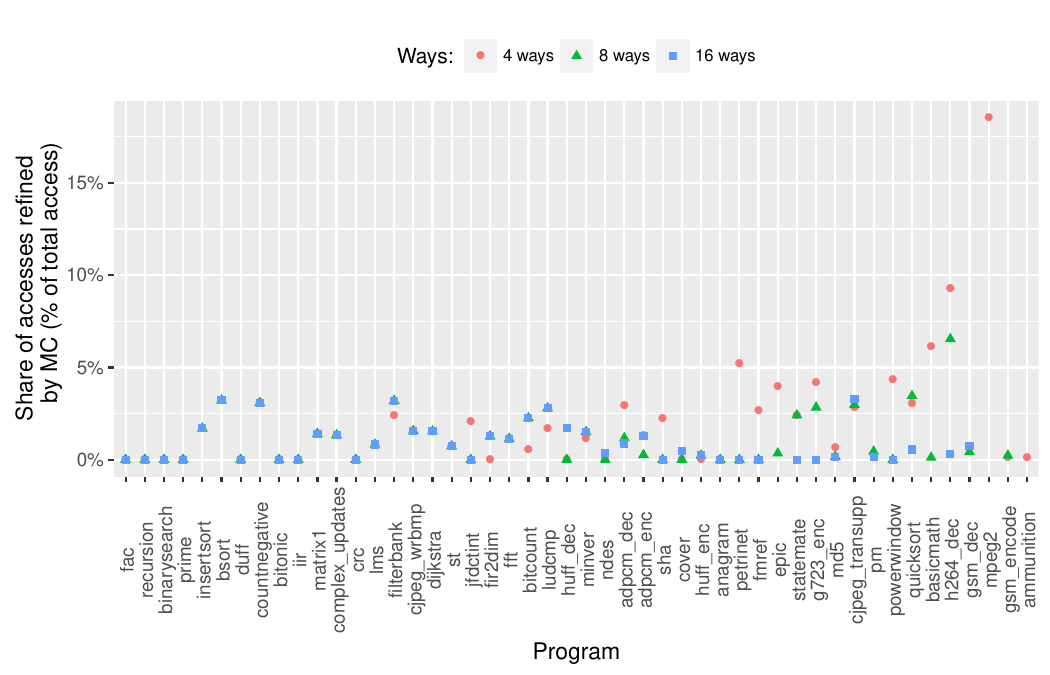}
 \end{center}
 \caption{Percentage of accesses whose classification was refined to hit or miss by model checking depending on the number of ways with 16 sets and 4 instructions per blocks.}
 \label{fig_rate_ways}
\end{figure}

\begin{figure}[ht]
 \begin{center}
 \includegraphics[width=\sizefactor\textwidth]{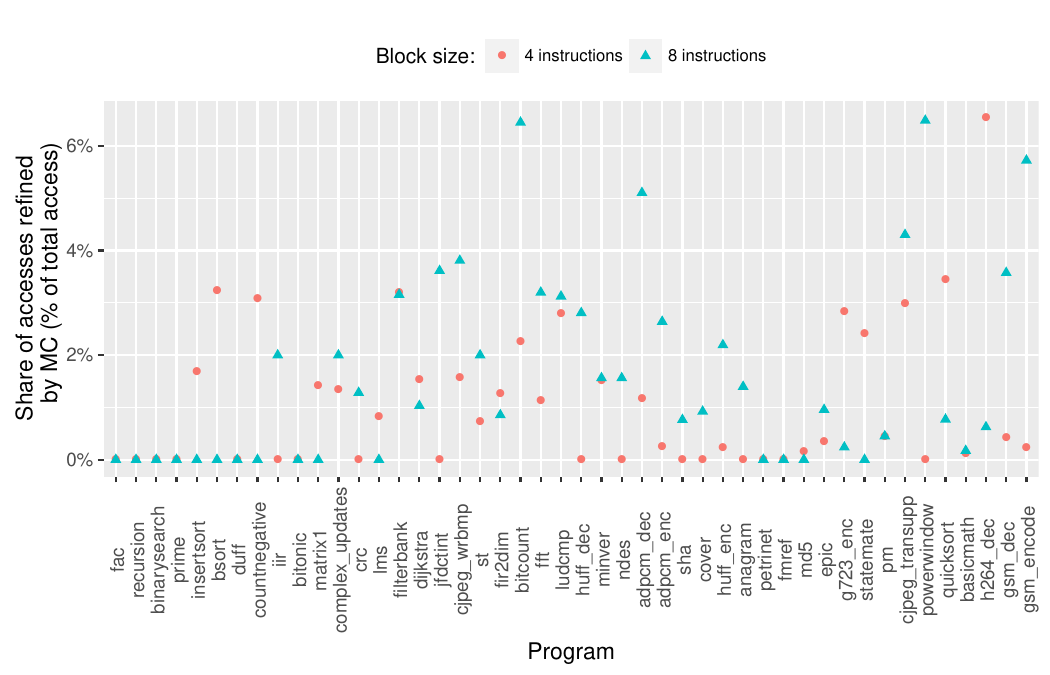}
 \end{center}
 \caption{Percentage of accesses whose classification was refined to hit or miss by model checking depending on the number of instructions per blocks with 8 ways and 16 sets.}
 \label{fig_rate_blocks}
\end{figure}

%%%%%%%%%%%%%%%%%%%%%%%

% \begin{figure}[t]
%  \includegraphics[width=\sizefactor\textwidth]{}
%  \caption{Number of calls to the MC with and without the definitely-unknown 
% analysis.}\todo{Geometric mean of the ratio (#calls without du/#calls with du) is 22.453 (zeroes are removed)}
%  \label{fig_MCcalls_DU}
% \end{figure}
% 
% \begin{figure}[t]
%  \includegraphics[width=\sizefactor\textwidth]{}
%  \caption{Total MC time with and without (may/must analysis only) the 
% definitely-unknown analysis.}\todo{must use logarithmic time scale! possibly scatter plot. otherwise you don't see anything for 2/3 of the figure. Logarithmic fig is available, as well as the scatter plot}
%  \label{fig_timeMC_DU}
% \end{figure}

\begin{figure}[ht]
 \begin{center}
 \includegraphics[width=\sizefactor\textwidth]{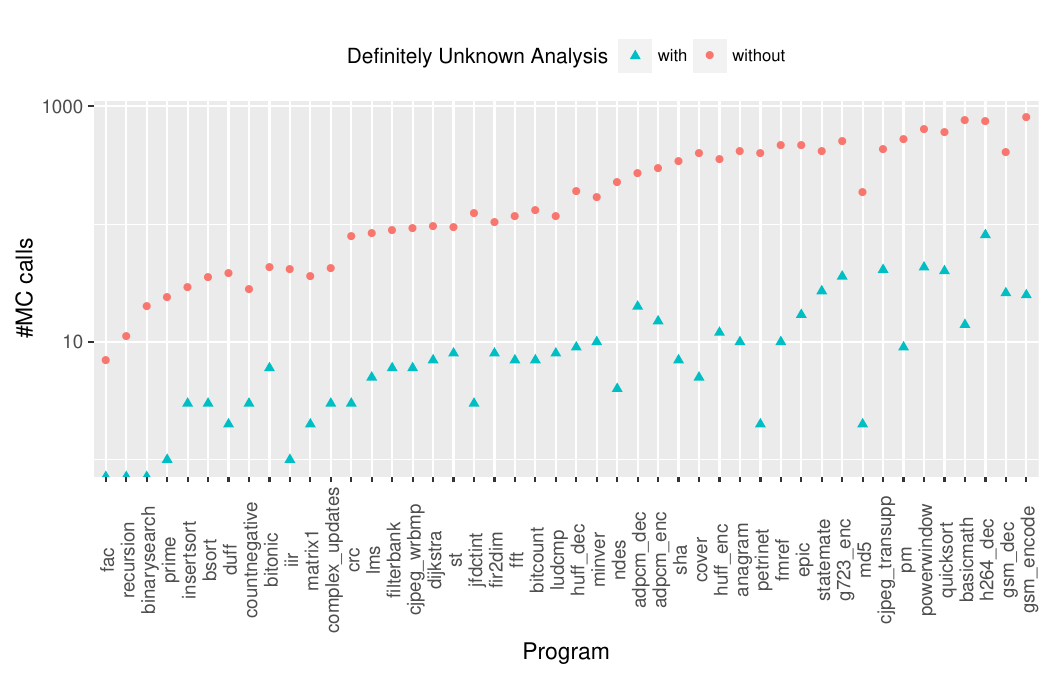}
 \end{center}
 \caption{Number of calls to the MC with and without the definitely-unknown analysis.}
 \label{fig_MCcalls_DU_log}
\end{figure}

\begin{figure}[ht]
 \begin{center}
 \includegraphics[width=\sizefactor\textwidth]{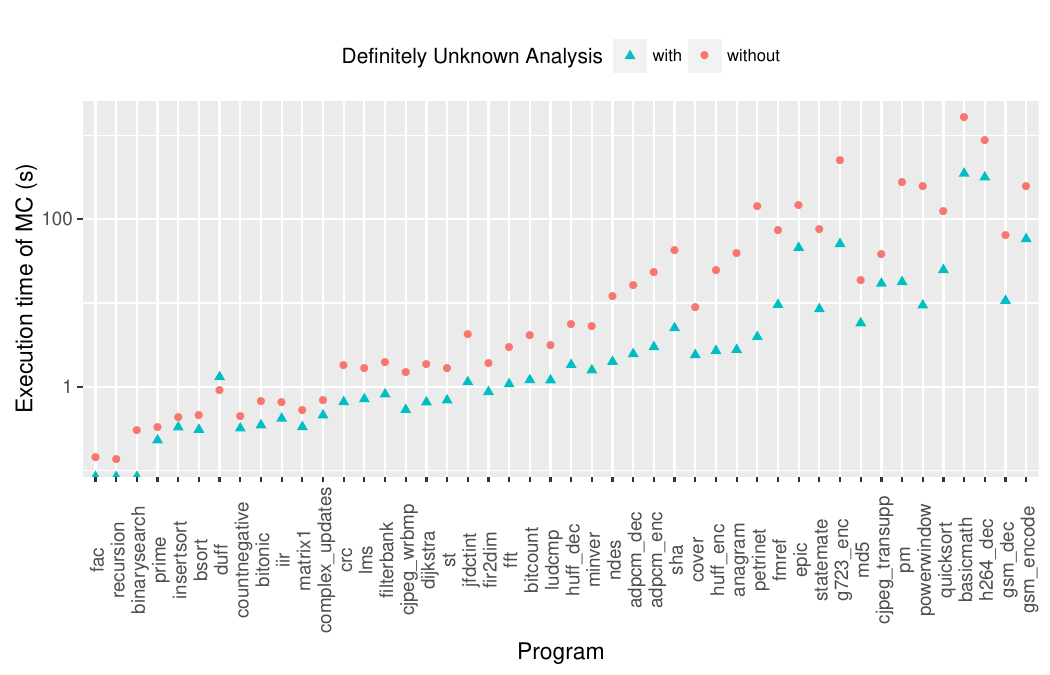}
 \end{center}
 \caption{Total MC time with and without the definitely-unknown analysis.}
 \label{fig_time_DU_log}
\end{figure}

%%%%%%%%%%%%%%

\begin{figure}[t]
 \begin{center}
 \includegraphics[width=\sizefactor\textwidth]{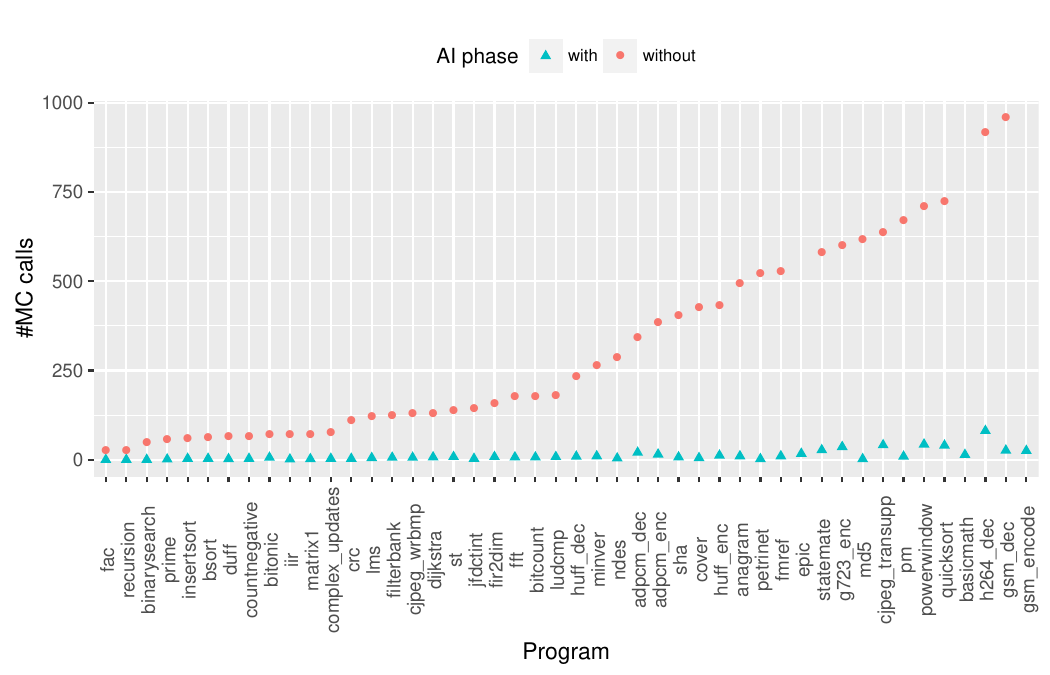}
 \end{center}
 \caption{Calls to the MC with and without AI phase (must/may + definitely unknown). \todo{how much of the reduction is due to the new AI? can we compare no analysis vs may/must, and then may/must vs may/must+def. unknown?}
 \todo{Geometric mean of the ratio (calls to mc without pre-analysis/with pre-analysis) is 31,466. Geom mean of the ratio (mc execution time without preanalysis/mc execution time with preanalysis) is 9.124}}
 \label{fig_MCcalls_PA}
\end{figure}
% \begin{figure}[t]
%  \includegraphics[width=\sizefactor\textwidth]{}
%  \caption{MC model size with/without pre-analysis (must/may 
% + definitely unknown).}
%  \label{fig_MCBDD_PA}
% \end{figure}
% \begin{figure}[t]
%  \includegraphics[width=\sizefactor\textwidth]{}
%  \caption{Execution time of each MC call (min, max and average) with and 
% without the AI phase (must/may + definitely unknown).\todo{need to be more clear about what is the difference here. Simplification of program model based on may analysis? How does def. unknown make MC more efficient? Or are we comparing apples with oranges?}}
%  \label{fig_indivMCtime_PA}
% \end{figure}

% \begin{figure}[ht]
%  \includegraphics[width=\sizefactor\textwidth]{}
%  \caption{Number of calls to the MC with and without AI phase (must/may + definitely unknown).}
%  \label{fig_MCcalls_PA_log}
% \end{figure}

\begin{figure}[ht]
 \begin{center}
 \includegraphics[width=\sizefactor\textwidth]{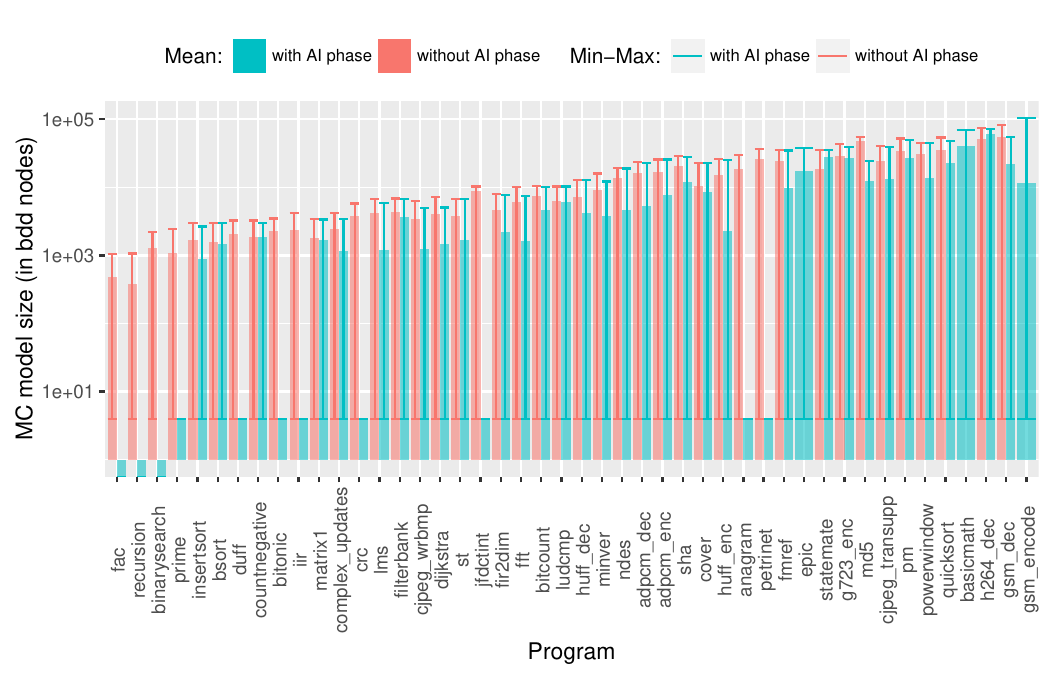}
 \end{center}
 \caption{Size of the MC model with and without the AI phase (must/may + definitely unknown).\todo{again what effect does DU have on MC model size?}}
 \label{fig_MCBDD_PA_log}
\end{figure}

% \begin{figure}[ht]
%  \includegraphics[width=\sizefactor\textwidth]{}
%  \caption{Execution time of each MC call (min, max and average) with and without the AI phase (must/may + definitely unknown), zoom on small benchmarks.}
%  \label{fig_indivMCtime_PA_small}
% \end{figure}

In this section, we give further experimental results and more details on the results found in the main part of the paper. Please note that for all experiments any missing plot means that a timeout of one hour has been reached.

Figures \ref{fig_ways}, \ref{fig_sets}, \ref{fig_blocks} show the
number of accesses whose classification was refined to hit or miss by model checking under different cache configurations: varying the memory block size in Figure~\ref{fig_blocks}, the number of ways in Figure~\ref{fig_ways} 
and the number of cache sets in Figure~\ref{fig_sets}.
%When there is no MC block classified, the may-must pre-analyses already 
%classified all blocks. \todo{what if unknown was refined to def. unknown? does this also count as being hit-miss-refined???}
Figures \ref{fig_rate_sets}, \ref{fig_rate_ways}, \ref{fig_rate_blocks} show the share of the total number of accesses whose classification was refined by model checking for each benchmark.

Figures \ref{fig_MCcalls_DU_log} and \ref{fig_time_DU_log} show the usefulness of the definitely-unknown analysis in terms of the number of MC calls and the cumulative MC execution time.

Figure \ref{fig_MCcalls_PA} shows the effect of the AI phase on the number of MC calls.

Figure \ref{fig_MCBDD_PA_log} shows the size of the MC model with/without AI phase. We observe that the maximum is quite close with and without the AI phase. This influences the average: as there are fewer MC calls with AI phase, the average may thus be larger.}
\end{document}